\def\ps@headings{%
\def\@oddhead{\mbox{}\scriptsize\rightmark \hfil \thepage}%
\def\@evenhead{\scriptsize\thepage \hfil \leftmark\mbox{}}%
\def\@oddfoot{}%
\def\@evenfoot{}}
\DeclareMathOperator*{\argmin}{argmin}
\DeclareMathOperator*{\argmax}{argmax}
\newtheorem{Prob}{Problem}
\newtheorem{Lem}{Lemma}
\newtheorem{Thm}{Theorem}
\newtheorem{Cor}{Corollary}
\newtheorem{Rem}{Remark}
\begin{document}

\title{Modeling, Analysis, and Optimization of Coded Caching in Small-Cell Networks}
\author{
\IEEEauthorblockN{Xuejian Xu and Meixia Tao}\\
\thanks{This work has been accepted in part by IEEE ICC 2017 \cite{XuTao2017}.}
\thanks{The authors are with the Department of Electronic Engineering, Shanghai Jiao Tong University, Shanghai, China. Emails: \{manchester\_red, mxtao\}@sjtu.edu.cn. }
}

\maketitle

\begin{abstract}
Coded caching is able to exploit accumulated cache size and hence superior
to uncoded caching by distributing different fractions of a file in
different nodes.
This work investigates coded caching in a large-scale
small-cell network (SCN) where the locations of small base stations (SBSs)
are modeled by stochastic geometry. We first propose a content delivery
framework, where multiple SBSs that cache different coded packets of a
desired file transmit concurrently upon a user request and the user decodes
the signals using successive interference cancellation (SIC). We
characterize the performance of coded caching by two performance metrics,
\emph{average fractional offloaded traffic} (AFOT) and \emph{average ergodic
rate} (AER), for which a closed-form expression and a tractable expression
are derived, respectively, in the high signal-to-noise ratio region. We then
formulate the coded cache placement problem for AFOT maximization as a
multiple-choice knapsack problem (MCKP). By utilizing the analytical
properties of AFOT, a greedy but optimal algorithm is proposed. We also
consider the coded cache placement problem for AER maximization. By
converting this problem into a standard MCKP, a heuristic algorithm is
proposed. Analytical and numerical results reveal several design and
performance insights of coded caching in conjunction with SIC receiver in
interference-limited SCNs.

\end{abstract}
\begin{IEEEkeywords}
Coded caching, successive interference cancellation, stochastic geometry, small-cell networks, knapsack problem.
\end{IEEEkeywords}

\section{Introduction}\label{sec:Introduction}
\subsection{Motivation and Related Works}
\label{subsec:MotivationandRelatedWorks}
The global mobile data traffic has undergone a fundamental shift from voices and messages to rich content  distributions, such as video streaming and application downloads. By prefetching popular contents during off-peak times at the edge of wireless networks, such as small base stations (SBSs), helper nodes, and user devices, wireless caching can alleviate peak-hour network congestion, provide traffic offloading, and improve users' quality of experience \cite{golrezaei2013,bastug2014,liu2016caching}.
It thus has attracted great attention recently.

Cache-enabled wireless networks operate in two phases in general, i.e., cache placement and content delivery \cite{Maddah-Ali2014}.
Cache placement is to place or update contents in each cache-enabled node subject to the storage size.
Content delivery is to deliver contents upon user requests conditioned on cache state as well as network channel condition.
While the design of the content delivery phase can be decoupled from the cache placement phase once the cache placement strategy is given, e.g., \cite{Tao2016,Zhou2016,Peng2014}, the design of the cache placement phase is tightly coupled with the content delivery phase in wireless networks, e.g. \cite{Malak2016,dai2016joint,blaszczyszyn2015optimal,liu2016caching2,choi2016,chen,chen2016}.
In this work, we are interested in the cache-enabled small-cell networks (SCNs), where each SBS is equipped with a local cache and can serve user requests based on its cached contents.
In a deterministic SCN with fixed connection topology, the cache placement problem is NP-hard \cite{FemtoCaching}.
Alternatively, the authors in \cite{bacstug2015cache,Yang2016analysis,blaszczyszyn2015optimal,liu2016caching2,choi2016,chen} apply tools from stochastic geometry to study the cache placement problem by assuming that the locations of SBSs follow a homogeneous Poisson point process (HPPP) in a large-scale SCN.
In particular, the authors in \cite{bacstug2015cache} analyze the outage probability and the average content delivery rate when each SBS stores the most popular contents and each user is only associated with the nearest SBS.
The authors in \cite{Yang2016analysis} analyze the average ergodic rate and the outage probability with the most popular contents caching in a three-tier heterogeneous network.
The works \cite{blaszczyszyn2015optimal,liu2016caching2,choi2016,chen} show that caching the contents randomly with the optimized probabilities has performance improvement, e.g., the cache hit probability, the successful offloading probability, and the average success probability of content
delivery, over caching the most popular contents since each user can be covered by multiple SBSs.

Recently, it is shown in \cite{chen2016,FemtoCaching,bioglio2015,Liao2016,altman2013,gu2013,Khreishah2016} that partitioning each file into multiple segments and caching different segments of a file in different SBSs can further enhance cache efficiency.
This is known as coded caching in \cite{FemtoCaching}.
When a user submits a file request, the file shall be delivered to the user from multiple nearby SBSs that have cached the different segments of the file, thereby exploiting the accumulated cache size.
The work \cite{chen2016} proposes a combined coded/uncoded caching strategy in disjoint cluster-centric SCN and analyzes the successful content delivery probability for a user located at the cluster center.
In the combined caching strategy, part of the cache space in each SBS is reserved for the most popular contents and the remaining is to cache different partitions of the less popular contents.
This strategy is, however, a heuristic one and cannot fully exploit the accumulated cache size.
In \cite{FemtoCaching,bioglio2015,Liao2016}, the maximum distance separable (MDS)-coded caching schemes are considered.
With an $(n, N)$ MDS code, each file is split into $n$ fragments and then encoded into $N\ (>n)$ coded packets. Any set of $n$ coded packets is sufficient to recover the file.
The works \cite{FemtoCaching} and \cite{bioglio2015} formulate the optimal MDS-coded cache placement problems to minimize the average file download time and the average backhaul rate, respectively.
Compared with \cite{bioglio2015}, the work \cite{Liao2016} considers a more practical scenario with heterogeneous file and cache sizes.
In \cite{altman2013,gu2013,Khreishah2016}, random linear network coding (RLNC) is applied to generate and store coded packets in SBSs.
Similar to MDS codes, when a file is split into $n$ fragments and the coding coefficients in RLNC are randomly selected from a large field, the file can be recovered from any $n$ coded packets with high probability.
All these works on MDS- or RLNC-coded caching in SCNs however assume an ideal error-free transmission with fixed connection topology in the content delivery phase.
Coded caching strategy in a realistic SCN with channel fading and inter-cell interferences remains uninvestigated.

\subsection{Contributions}\label{subsec:Contributions}
In this work, we apply stochastic geometry to model, analyze, and optimize coded caching in a cache-enabled SCN.
The locations of SBSs are assumed to follow an HPPP on a plane.
Each file is first partitioned into $n$ fragments and then encoded using either MDS codes or RLNC into an arbitrarily large number of coded packets, such that any set of $n$ coded packets is enough to recover the original file.
Each SBS stores a certain number of different coded packets for each file subject to a finite cache size.
When a user requests a file, the user will be associated with a sufficient number of SBSs that have cached the coded packets of the desired file for file downloading.
The main contributions of this work are summarized as follows.
\begin{itemize}
  \item \emph{A content delivery framework with SIC receiver}: We propose a new content delivery framework for coded caching, where multiple SBSs that cache the different coded packets of a desired file transmit concurrently upon a user request and the user decodes the signals successively using a successive interference cancellation (SIC)-based receiver. The number of concurrently transmitting SBSs, or equivalently, the number of SIC decoding layers, depends on the number of coded packets of the desired file cached in each SBS.
      We also obtain a closed-form expression to tightly approximate the success probability of each SIC decoding layer for a typical user in the high signal-to-noise ratio (SNR) region.
  \item \emph{New performance metrics and analysis}:
      We introduce and analyze two performance metrics for coded caching in SCNs.
      One is the average fractional offloaded traffic (AFOT).
      It measures the average fraction of each file that can be successfully delivered by the cache-enabled SBSs and hence offloaded from the core network at a given target SIC decoding threshold.
      Based on the success probability of each SIC decoding layer,
      we obtain a closed-form expression of AFOT.
      We show that the fractional offloaded traffic (FOT) for a given file is an increasing, concave, and piece-wise arithmetic sequence as the number of its coded packets cached in each SBS increases.
      The other is the average ergodic rate (AER), which characterizes the average ergodic data rate of a typical user achievable over the cache-enabled SBSs.
      We then obtain a tractable expression for AER.
  \item \emph{Optimization of coded cache placement}: We formulate two coded cache placement problems for AFOT maximization and AER maximization, respectively.
      The AFOT maximization problem is a multiple-choice knapsack problem (MCKP).
      Utilizing the analytical properties of AFOT, we propose a greedy-based algorithm that finds the global optimal solution efficiently.
      In the limiting case where each file can be split into infinite number of fragments, we also reformulate the discrete problem for AFOT maximization into a continuous one, which is shown to be convex.
      The optimal solution of the continuous problem provides a performance upper bound for the discrete problem.
      It also prompts us to propose a low-complexity algorithm with high performance for the discrete problem.
      To solve the coded cache placement problem for AER maximization, we transform it into a standard MCKP, for which a heuristic algorithm is proposed.
  \item \emph{Performance and design insights via numerical simulations}:
      Extensive numerical results demonstrate that it is good enough to split each file into not-so-large, e.g., $8$, fragments for coded caching to approach the performance upper bound of AFOT.
      Results also show that coded caching can well exploit the accumulated cache size among neighboring SBSs and hence significantly offload more traffic than uncoded caching when the decoding threshold for each SIC decoding layer is low.
      On the other hand, the gain of coded caching vanishes in terms of the AER performance unless the file popularity distribution is close to uniform.
\end{itemize}

\subsection{Paper Organization}\label{subsec:PaperOrganization}
This paper is organized as follows.
We present the network model, coded caching with SIC receiver, and the performance metrics, namely, AFOT and AER in Section~\ref{sec:SystemModel}.
We analyze the proposed performance metrics in Section~\ref{sec:PerformanceAnalysis} and formulate the coded cache problems into MCKP to optimize the caching vector in Section~\ref{sec:OptimizationofCodedCaching}.
Numerical results are presented in Section~\ref{sec:NumericalResults} and Section~\ref{sec:Conclusion} concludes the paper.

\section{System Model}\label{sec:SystemModel}
\subsection{Network Model}\label{subsec:netmodel}
We consider a cache-enabled SCN where the SBSs and users are spatially distributed on a plane according to two independent HPPPs with densities $\lambda_b$ and $\lambda_u$, respectively.
Each SBS is equipped with a local cache and serves mobile users with its cached contents.
All SBSs and users are equipped with single antenna.
Without loss of generality, we focus on the analysis at a typical user $u_0$.
We define $\phi_k$ as the $k$-th nearest SBS to $u_0$ with distance denoted by $r_k$.
The network is assumed to be fully loaded so that each SBS transmits signals all the time, which is the worst-case scenario for performance analysis.
There is also a tier of macro base stations (MBSs) in the network, which are connected to the core network via backhaul links and communicate with users only when the user demands cannot be satisfied by SBSs.
We assume that the MBSs and SBSs operate over non-overlapping frequency bands to avoid the inter-tier interference.

In the downlink channel, we consider both large-scale fading and small-scale fading. The large-scale fading is modeled by a standard distance-dependent power law pathloss attenuation.
For small-scale fading, Rayleigh fading is considered. The transmission power of each SBS is $P$. Hence, the received signal power at $u_0$ from $\phi_k$ can be written as $|h_k|^2 r_k^{-\alpha} P$, where $h_k\sim\mathcal {CN}(0, 1)$ is the complex-valued channel coefficient from $\phi_k$ to $u_0$ and $\alpha > 2$ is the pathloss exponent.

\subsection{Coded Caching Model}\label{subsect:CodedCachingModel}
Consider a content library consisting of $F$ files, denoted by $\mathcal{W}\triangleq\{W_1,W_2,\ldots,W_F\}$.
Each file has the same size of $B$ bits. The popularity of $W_j$ is denoted as $p_j$, for $j\in\mathcal{F}\triangleq\{1,2,\ldots,F\}$, satisfying $0\leq p_j\leq1$ and $\sum\limits_{j=1}^F p_j = 1$. Without loss of generality, we assume $p_1\geq p_2\geq \ldots \geq p_F$.

A coded caching scheme similar to \cite{bioglio2015} is adopted in this work. Each file in the content library is split into $n$ equal-sized fragments, for $n$ being any positive integer. The $n$ fragments of each file are then encoded into an arbitrarily large number of  packets.
Here, we mention \emph{fragment} as the original part of a file and \emph{packet} as the coded part of a file.
We do not restrict to any specific coding scheme and only require that the file can be decoded successfully from any $n$ coded packets.
In practice, this requirement can be satisfied by using MDS codes or satisfied with high probability by using RLNC.

Each SBS has a local cache and can store up to $M B$ bits $(M < F)$. In the cache placement phase, each SBS stores $m_j\in \mathcal{N}\triangleq\{0,1,\ldots, n\}$ different coded packets of $W_j$, for each $j\in \mathcal{F}$. We refer to $\mathbf{m} \triangleq [m_1,m_2,\ldots,m_F]$ as the \emph{caching vector}, which shall be optimized subject to the following cache size constraint:
\begin{align}
 &\sum\limits_{j=1}^F \frac{m_j}n \leq M. \label{eqn:discrete-constraint-1}
\end{align}

The specific packet storing approach depends on the coding scheme in the cache placement phase.
For example, if an $(n,N)$ MDS code with rate $\frac nN$ is used, where $n$ is the number of the fragments that each file is partitioned into and $N$ is the number of the coded packets based on the $n$ fragments, we can use graph coloring to store the coded packets in each SBS.
In specific, we first tessellate the considered network area into $\lceil\frac n{m_j}\rceil$-th order Voronoi regions, each of which is associated with $\lceil\frac n{m_j}\rceil$ SBSs so that any point in this region is closer to these $\lceil\frac n{m_j}\rceil$ SBSs than any other SBS in the network area.
Then, we construct a graph, where each SBS is represented by a vertex and every pair of SBSs that belong to the same $\lceil\frac n{m_j}\rceil$-th order Voronoi region are connected with an edge.
Let every distinct set of $m_j$ coded packets be represented by a color.
The maximum number of colors is $\bigl\lfloor\frac N{m_j}\bigr\rfloor$.
We then color the vertices of the graph so that no adjacent vertices share the same color, which is clearly feasible when $N$ is sufficiently large.
By this coloring-based packet storing approach, the $\lceil\frac n{m_j}\rceil$ SBSs in each Voronoi region are guaranteed to have different coded packets.
On the other hand, if RLNC is used by each SBS to encode the $n$ fragments with the coding coefficients independently and uniformly selected from a large field, the $m_j$ coded packets generated and stored in each SBS are ensured to be linearly independent with high probability.
In specific, let $\mathbb{F}_q$ denote the finite field where the random coding coefficients are chosen from and let $A$ denote the event that any $n$ coded packets collected from any $\lceil\frac n{m_j}\rceil$ SBSs in the network can recover the $n$ uncoded fragments and hence the complete file.
According to \cite{altman2013}, we have that $P(A) \geq \left(1-\frac 1q\right)^n$.
It is seen that when $q$ is large enough, $P(A)$ is very large.

\subsection{Content Request and Delivery Model}
\label{subsec:ContentRequestandDeliveryModel}
Each user requests for one file independently with probability according to the file popularity distribution.
Consider that the typical user $u_0$ sends a request for $W_j$, $j\in \mathcal F$. If $m_j=0$, i.e., no SBS has stored any part of $W_j$, $u_0$ will be served by an MBS, which is assumed to have access to all files through backhaul link.
 If $m_j\neq0$, $u_0$ will be associated with the set of $\bigl\lceil\frac{n}{m_j}\bigr\rceil$ nearest SBSs, denoted as $\Phi_j \triangleq \{\phi_1, \phi_2, \ldots, \phi_{\lceil\frac{n}{m_j}\rceil}\}$, to retrieve enough number of coded packets for the file decoding\footnote{The work can be extended to other user association strategies by considering different practical constraints.
 For instance, due to the limited SIC decoding capability, a user may only be associated with a maximum of $K (< n)$ nearest SBSs.
 Likewise, due the the large signal attenuation, a user may only be associated with the SBSs that are within a given distance.}.
The $\bigl\lceil\frac{n}{m_j}\bigr\rceil$ SBSs transmit their own cached packets concurrently over the same resource block.
Users adopt SIC-based receiver to decode the signals from multiple SBSs successively in the descending order of the average received signal strength as in \cite{chen2016}.
More specifically, the signal from the nearest SBS $\phi_1$ is decoded first and, if successful, subtracted from the received signal, then the signal from the second nearest SBS $\phi_2$ is decoded based on the residue received signal. The procedure repeats till the signals from all the SBSs in $\Phi_j$ are decoded.
The total number of decoding layers in the SIC-based receiver (equivalent to the total number of SBSs that transmit concurrently to $u_0$) is $\bigl\lceil\frac{n}{m_j}\bigr\rceil$.
This number varies between $1$ and $n$, depending on the different choices of $m_j$ for $W_j$.
Thus, the coding parameter $n$ for content caching sets the highest possible SIC receiver complexity.

Note that to decode the signal from $\phi_k$, the signals from $\{\phi_1, \phi_2, \ldots, \phi_{k-1}\}$ must be decoded and subtracted first from the received signal, and then only the signals from $\phi_{k+1}$ and beyond are seen as interferences.
Define $\Phi_{O}^k \triangleq \{\phi_{i}|r_{i}\geq r_k\} \setminus \{\phi_k\}$ as the set of SBSs whose distances to $u_0$ are larger than or equal to $r_k$ , for $k\in \mathcal{N^+}\triangleq\mathcal{N}\setminus \{0\}$.
Then, the residue received signal for decoding the signal from $\phi_k$ is given by
  \begin{align}\label{}
    y_k = h_kr_k^{-\frac{\alpha}2}\sqrt{P} s_k + \sum\limits_{i \in \Phi_{O}^k} h_ir_i^{-\frac{\alpha}2} \sqrt{P}s_i + n_0, \ k\in \mathcal{N}^+,
 \end{align}
 where $s_k$ denotes the transmitted symbol from $\phi_k$ and is assumed independent for different $k$'s, and $n_0\sim\mathcal {CN}(0, N_0)$ is the complex additive white Gaussian noise of power $N_0$.

 We focus on the interference-limited network where the noise can be neglected. The signal-to-interference ratio (SIR) for decoding $s_k$ is given by
\begin{align}\label{eqn:SIR-C}
    \text{SIR}_k = \frac{|h_k|^2r_k^{-\alpha}}{\sum\limits_{i \in \Phi_{O}^k} |h_i|^2r_i^{-\alpha}},  \ k\in \mathcal{N}^+.
\end{align}

\subsection{Performance Metrics}
\label{subsec:PerformanceMetric}
In our work, we shall measure the performance of coded caching using two metrics, namely, the \emph{average fractional offloaded traffic (AFOT)} and the \emph{average ergodic rate (AER)}.
The former characterizes the average fraction of each file that can be successfully delivered by the cache-enabled SBSs and hence offloaded from the core network at a given target SIC decoding threshold.
The latter characterizes the average ergodic data rate of the typical user achievable over the cache-enabled SBSs.

\subsubsection{Average Fractional Offloaded Traffic (AFOT)}
\label{subsubsec:AverageFractionalOffloadedTraffic}
We first define the successful transmission probability of each decoding layer at a common SIR threshold $\tau$.
The signal from $\phi_k$ can only be decoded successfully if $\text{SIR}_k \geq \tau$ after the signals from $\{\phi_1, \phi_2, \ldots, \phi_{k-1}\}$ are all decoded and subtracted successfully.
The probability of this event is referred to as the \emph{success probability of the $k$-th decoding layer}, denoted by
\begin{align}\label{eqn:definition-q-k}
 q_k(\tau) \triangleq \text{Pr}\left[\text{SIR}_k\geq \tau \bigg| \bigcap_{i=1,2,\ldots,k-1}\text{SIR}_i\geq \tau\right],  \ k\in \mathcal{N}^+.
\end{align}
Otherwise, if $\text{SIR}_k < \tau$, the decoding of $s_k$ fails and the SIC process terminates.
The remaining coded packets will be acquired from an MBS for $u_0$ to recover the complete requested file.

Consider that $u_0$ sends a request for $W_j$ and $m_j$ coded packets of the file are stored in each SBS.
As mentioned above, the user can only decode the $m_j$ packets from $\phi_k$ after all the transmissions from the first $k-1$ nearest SBSs are successful. Once the decoding of the $k$-th layer fails, the SIC-based decoding process terminates.
Then, only $(k-1)m_j$ coded packets of $W_j$ can be offloaded by the SBSs. If all the $\bigl\lceil\frac{n}{m_j}\bigr\rceil$ decoding layers are successful, the entire file can be offloaded.
Therefore, the \emph{fractional offloaded traffic (FOT)} for $W_j$, denoted as $L[m_j]$, is given by
\begin{align}\label{eqn:R-mj}
 L[m_j] \triangleq
\begin{cases}
0, &m_j=0, \\
\sum\limits_{k=1}^{\bigl\lceil\frac{n}{m_j}\bigr\rceil} \frac{(k-1)m_j}{n}
\text{Pr} \left[\bigcap\limits_{i=1,2,\ldots,k-1}\text{SIR}_i\geq \tau, \text{SIR}_k< \tau\right] \\
\hspace{0.5cm}+ \text{Pr} \Bigg[\bigcap\limits_{i=1,2,\ldots,\lceil\frac{n}{m_j}\rceil}\text{SIR}_i\geq \tau\Bigg],  &m_j \in \mathcal{N^+}.
\end{cases}
\end{align}

By averaging over all files in $\mathcal{W}$, the AFOT, denoted as $L$, is given by
\begin{align}\label{eqn:theaveragebackhaulrateofcodedcachingscheme}
 L \triangleq \sum\limits_{j=1}^F p_j L[m_j].
\end{align}

\begin{Rem}
We would like to highlight that the decoding threshold $\tau$ in \eqref{eqn:definition-q-k} is set as a constant, independent of the decoding layers $\bigl\lceil\frac{n}{m_j}\bigr\rceil$ in the SIC receiver.
At first glance, this means that it only takes $\frac{m_j}n T_0$ seconds to deliver the requested file $W_j$ (no matter it succeeds or not) since each SBS only transmits $\frac{m_j}n B$ bits, where $T_0 = \frac B {W\log_2(1+\tau)}$ is the target delivery time of a file of $B$ bits in the traditional uncoded caching scheme with channel bandwidth of $W$ Hz.
However, $\bigl\lceil\frac{n}{m_j}\bigr\rceil$ SBSs are involved in delivering $W_j$. The accumulated time resource consumed by delivering $W_j$ thus becomes  $\bigl\lceil\frac{n}{m_j}\bigr\rceil \frac{m_j}{n} T_0$, which is equal to or slightly larger than $T_0$. Thus, having a constant $\tau$ in \eqref{eqn:definition-q-k} ensures a fair comparison between coded and uncoded caching in terms of the total network resource consumption.
\end{Rem}

\subsubsection{Average Ergodic Rate (AER)}
\label{subsubsec:AverageErgodicRate}
Again, consider that $u_0$ sends a request for $W_j$ and $m_j$ coded packets of the file are stored in each SBS.
Here, we assume that each SBS adopts a rate-adaptive transmission strategy so that the signals transmitted from all the SBSs in $\Phi_j$ are always decodable using the SIC receiver when $m_j \in \mathcal{N^+}$.
Since each SBS in $\Phi_j$ needs to transmit the same amount of $\frac{m_j}{n}B$ information bits
to $u_0$, the actual transmission rate of each SBS is chosen according to the worst SIR among all decoding layers in~\eqref{eqn:SIR-C}.
Thus, the \emph{ergodic rate} achievable by $u_0$ when requesting $W_j$, denoted as $R[m_j]$ bits/s/Hz, is given by
\begin{align}\label{eqn:definition-ergodic-rate}
  R[m_j] \triangleq
\begin{cases}
0, &m_j=0, \\
\bigl\lceil\frac{n}{m_j}\bigr\rceil \mathbb{E}\left[\log\left(1+\min\left(\text{SIR}_1,\ldots, \text{SIR}_{\lceil\frac{n}{m_j}\rceil}\right)\right)\right], &m_j \in \mathcal{N^+}.
\end{cases}
\end{align}


By averaging over all files in $\mathcal{W}$, the AER of the typical user achievable over the cache-enabled SBSs, denoted as $R$ bits/s/Hz, is given by
\begin{align}\label{eqn:definition-average-ergodic-rate}
  R \triangleq \sum\limits_{j=1}^F p_j R[m_j].
\end{align}


\section{Performance Analysis}
\label{sec:PerformanceAnalysis}
In this section, we shall analyze the performance of coded caching using the AFOT and AER defined in the previous section.
Especially, exploring the structural properties of FOT will benefit the coded caching optimization problem for AFOT maximization in Section~\ref{sec:OptimizationofCodedCaching}.

\subsection{Analysis of Fractional Offloaded Traffic $L[m_j]$}
\label{subsec:AnalysisandPropertiesofFOT}
We first obtain a closed-form expression for the success probability of the $k$-th decoding layer $q_k(\tau)$, defined in \eqref{eqn:definition-q-k}, then apply it to obtain a closed-form expression for $L[m_j]$ defined in \eqref{eqn:R-mj}.
 \begin{Lem}\label{Lem:CodedTransmissionCoverageProbability}
   The success probability of the $k$-th decoding layer with SIC receiver can be approximated as
 \begin{align}\label{eqn:CodedTransmissionCoverageProbability}
   q_k(\tau) = \left(1+\frac2{\alpha}\tau^{\frac2{\alpha}}B'\left(\frac2{\alpha},1-\frac2{\alpha},\frac1{1+\tau}\right)\right)^{-k}, \ k\in \mathcal{N}^+,
 \end{align}
 where $B'(x,y,z) \triangleq \int_z^1 u^{x-1} (1-u)^{y-1} \mathrm{d}u$ is the complementary incomplete Beta function.
 \end{Lem}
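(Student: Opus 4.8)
The plan is to reduce the intractable conditional probability $q_k(\tau)$ to a marginal coverage probability that admits an exact closed form. The difficulty in $q_k(\tau)$ is that the event $\text{SIR}_k \geq \tau$ and the conditioning events $\text{SIR}_i \geq \tau$ for $i<k$ are correlated: they are all driven by the same ordered distances $r_1 < r_2 < \cdots$ and share the interference contributed by the SBSs farther than $\phi_k$. Following the standard treatment of SIC in Poisson networks, I would approximate the decoding layers as mutually independent, so that the conditioning is dropped and $q_k(\tau) \approx \Pr[\text{SIR}_k \geq \tau]$. Since successfully decoding the nearer, stronger layers tends to force the farther signals, which constitute the interference to layer $k$, to be comparatively weak, this approximation is expected to be tight, and its accuracy would be confirmed numerically later. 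After cancellation of $\phi_1,\ldots,\phi_{k-1}$, the residual interference to $\phi_k$ comes only from $\Phi_O^k$, i.e.\ the SBSs strictly farther than $r_k$, so the remaining task is to compute the coverage probability of the $k$-th nearest SBS against the PPP of farther interferers.

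I would then compute $\Pr[\text{SIR}_k \geq \tau]$ exactly. First condition on $r_k$. Because $h_k$ is Rayleigh, $g_k \triangleq |h_k|^2$ is unit-mean exponential, so
\begin{align*}
\Pr[\text{SIR}_k \geq \tau \mid r_k] = \Pr\bigl[\,g_k \geq \tau r_k^{\alpha} I_k \,\big|\, r_k\,\bigr] = \mathbb{E}\bigl[\,e^{-\tau r_k^{\alpha} I_k} \,\big|\, r_k\,\bigr],
\end{align*}
where $I_k \triangleq \sum_{i \in \Phi_O^k} g_i r_i^{-\alpha}$. Given $r_k$, the interfering SBSs form a homogeneous PPP of intensity $\lambda_b$ on $\{x : |x| > r_k\}$, since conditioning on the $k$-th nearest distance only fixes the count inside the disk of radius $r_k$ and leaves the exterior PPP untouched. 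Applying the Laplace functional of the PPP, averaging over the i.i.d.\ exponential marks $g_i$, and substituting $x = r_k t$ yields
\begin{align*}
\mathbb{E}\bigl[\,e^{-\tau r_k^{\alpha} I_k} \,\big|\, r_k\,\bigr] = \exp\bigl(-2\pi\lambda_b r_k^{2}\, \beta(\tau,\alpha)\bigr), \qquad \beta(\tau,\alpha) \triangleq \int_1^{\infty} \frac{\tau t}{t^{\alpha} + \tau}\, \mathrm{d}t,
\end{align*}
the key point being that the integral scales as $r_k^2$, matching the area scaling of the PPP.

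Next I would decondition by averaging against the density of the $k$-th nearest distance in a planar PPP, $f_{r_k}(r) = \frac{2(\pi\lambda_b)^k}{(k-1)!} r^{2k-1} e^{-\pi\lambda_b r^2}$. The resulting integral $\int_0^\infty r^{2k-1} e^{-\pi\lambda_b(1+2\beta) r^2}\,\mathrm{d}r$ is a standard Gamma integral, and $\lambda_b$ cancels, as expected in an interference-limited network, giving
\begin{align*}
\Pr[\text{SIR}_k \geq \tau] = \bigl(1 + 2\beta(\tau,\alpha)\bigr)^{-k}.
\end{align*}
The exponent $-k$ is inherited directly from the $r^{2k-1}$ factor in $f_{r_k}$, which is exactly what produces the $-k$ power in the statement.

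The remaining and most error-prone step is to identify $2\beta(\tau,\alpha)$ with $\frac{2}{\alpha}\tau^{2/\alpha} B'\bigl(\frac{2}{\alpha}, 1-\frac{2}{\alpha}, \frac{1}{1+\tau}\bigr)$. I would do this via the change of variables $t^{\alpha} = \tau w$, which turns $\beta$ into $\frac{\tau^{2/\alpha}}{\alpha}\int_{1/\tau}^{\infty} \frac{w^{2/\alpha - 1}}{1+w}\,\mathrm{d}w$, followed by $u = \frac{w}{1+w}$ (so that $w = \frac{u}{1-u}$ and the lower limit $w = 1/\tau$ maps to $u = \frac{1}{1+\tau}$). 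This second substitution sends the integrand to $u^{2/\alpha - 1}(1-u)^{-2/\alpha}$ on $[\frac{1}{1+\tau}, 1]$, which is precisely $B'\bigl(\frac{2}{\alpha}, 1-\frac{2}{\alpha}, \frac{1}{1+\tau}\bigr)$ by definition; substituting back gives the claimed form of $q_k(\tau)$. The main obstacle is thus not a single deep idea but keeping the two substitutions and their limits consistent so that the interference integral collapses exactly onto the complementary incomplete Beta function, while the genuine conceptual leap is the layer-independence approximation of the first paragraph, on which the exactness of everything downstream rests.
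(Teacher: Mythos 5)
Your proposal is correct and reaches the lemma by essentially the same route as the paper: the crucial (and only genuinely lossy) step in both is the approximation that the events $\{\text{SIR}_i\geq\tau\}$ are independent across SIC layers, so that the conditional probability $q_k(\tau)$ collapses to the marginal $\text{Pr}[\text{SIR}_k\geq\tau]$, after which changes of variables put the interference integral into the complementary incomplete Beta form. The one difference is how the marginal is obtained: the paper simply invokes a known result (Lemma 3 of its cited SIC reference) for $\text{Pr}[\text{SIR}_k\geq\tau]=\bigl(1+\tau^{\frac2{\alpha}}\int_{\tau^{-2/\alpha}}^{\infty}(1+w^{\alpha/2})^{-1}\,\mathrm{d}w\bigr)^{-k}$, whereas you derive it from first principles: conditioning on $r_k$, using the exponential fading to write the conditional success probability as the Laplace transform of the interference from the PPP outside radius $r_k$, and deconditioning against the density of the $k$-th nearest distance so that $\lambda_b$ cancels and the exponent $-k$ emerges from the Gamma integral. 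Your computations check out — the exclusion-zone integral scales as $r_k^2\beta(\tau,\alpha)$ with $\beta(\tau,\alpha)=\int_1^{\infty}\frac{\tau t}{t^{\alpha}+\tau}\,\mathrm{d}t$, the deconditioning gives $(1+2\beta)^{-k}$, and the substitutions $t^{\alpha}=\tau w$ followed by $u=\frac{w}{1+w}$ map $2\beta(\tau,\alpha)$ exactly onto $\frac2{\alpha}\tau^{\frac2{\alpha}}B'\bigl(\frac2{\alpha},1-\frac2{\alpha},\frac1{1+\tau}\bigr)$, matching the paper's intermediate expression after its own (different but equivalent) substitutions $t=w^{1/2}$ and $u=\frac1{1+t^{-\alpha}}$. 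So your version buys self-containedness at the cost of length, while the paper buys brevity by outsourcing the $k$-th-nearest-SBS coverage computation to the literature.
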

\begin{proof}Please see Appendix A.
\end{proof}

We shall demonstrate via simulation in Section~\ref{subsec:SuccessfulTransmissionProbabilityValidation} that the closed-form expression in Lemma~\ref{Lem:CodedTransmissionCoverageProbability} is a tight approximation of the true successful transmission probability of SBS $\phi_k$ after the signals from $\{\phi_1,\phi_2,\ldots,\phi_{k-1}\}$ are all successfully decoded and subtracted.

With Lemma~\ref{Lem:CodedTransmissionCoverageProbability} and definition in \eqref{eqn:R-mj}, FOT of $W_j$ can be obtained in the following lemma.
\begin{Lem}
\label{Lem:BRofCodedCaching}
 The fractional offloaded traffic of $W_j$ with caching parameter $m_j$ is given by
\begin{align}\label{eqn:R_c}
L[m_j] =
\begin{cases}
0, &m_j=0, \\
\frac{m_j}n\sum\limits_{k=1}^{\bigl\lceil\frac{n}{m_j}\bigr\rceil}C_k(\tau)
+\left(1-\frac{m_j}{n}\bigl\lceil\frac{n}{m_j}\bigr\rceil\right)
 C_{\lceil\frac{n}{m_j}\rceil}(\tau),  &m_j \in \mathcal{N^+},
\end{cases}
\end{align}
where
\begin{align}\label{eqn:c_k_tau}
   C_k(\tau) = \left(1+\frac2{\alpha}\tau^{\frac2{\alpha}} B'\left(\frac2{\alpha},1-\frac2{\alpha},\frac1{1+\tau}\right)\right)^{-\frac{k(k+1)}2}.
\end{align}
\end{Lem}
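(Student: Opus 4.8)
The plan is to recognize that the quantity $C_k(\tau)$ appearing in the statement is precisely the joint probability that the first $k$ decoding layers all succeed, and then to collapse the sum in definition~\eqref{eqn:R-mj} by summation by parts. First I would invoke Lemma~\ref{Lem:CodedTransmissionCoverageProbability} together with the chain rule of conditional probability. Writing $\beta \triangleq 1+\frac2{\alpha}\tau^{\frac2{\alpha}}B'\left(\frac2{\alpha},1-\frac2{\alpha},\frac1{1+\tau}\right)$ so that $q_k(\tau)=\beta^{-k}$, the definition~\eqref{eqn:definition-q-k} gives
\begin{align}
\text{Pr}\left[\bigcap_{i=1}^{k}\text{SIR}_i\geq \tau\right] = \prod_{i=1}^{k} q_i(\tau) = \beta^{-\sum_{i=1}^k i} = \beta^{-\frac{k(k+1)}2} = C_k(\tau),
\end{align}
with the convention $C_0(\tau)\triangleq 1$. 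Hence each cumulative-success event admits a clean closed form and, in particular, the last term of~\eqref{eqn:R-mj} is exactly $C_{K_j}(\tau)$, where I abbreviate $K_j\triangleq\bigl\lceil\frac{n}{m_j}\bigr\rceil$.

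Next I would rewrite the failure-at-layer-$k$ probability inside~\eqref{eqn:R-mj} as a difference of two such cumulative-success probabilities. Since, conditioned on the first $k-1$ successes, the event $\{\text{SIR}_k<\tau\}$ is the complement of the $k$-th layer succeeding,
\begin{align}
\text{Pr}\left[\bigcap_{i=1}^{k-1}\text{SIR}_i\geq \tau,\ \text{SIR}_k< \tau\right] = C_{k-1}(\tau) - C_k(\tau).
\end{align}
Substituting this difference together with the endpoint term into~\eqref{eqn:R-mj} turns $L[m_j]$ into $\frac{m_j}{n}\sum_{k=1}^{K_j}(k-1)\bigl[C_{k-1}(\tau)-C_k(\tau)\bigr] + C_{K_j}(\tau)$, so the problem is reduced to evaluating a single coefficient-weighted telescoping sum.

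The main step is to simplify this weighted sum by Abel summation. After shifting the index in one of the two pieces and regrouping the $C_k(\tau)$ terms, the contributions $\sum (k-1)C_{k-1}(\tau)$ and $\sum (k-1)C_k(\tau)$ cancel almost entirely, leaving $\frac{m_j}{n}\bigl[\sum_{k=1}^{K_j}C_k(\tau) - K_j C_{K_j}(\tau)\bigr]$. Adding back the final $C_{K_j}(\tau)$ and factoring the coefficient of $C_{K_j}(\tau)$ then yields $\frac{m_j}{n}\sum_{k=1}^{K_j}C_k(\tau) + \bigl(1-\frac{m_j}{n}K_j\bigr)C_{K_j}(\tau)$, which is exactly~\eqref{eqn:R_c}. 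I do not expect any genuine obstacle here; the only care needed is in bookkeeping the index shift in the summation by parts and in treating the boundary values $C_0(\tau)=1$ and the $k=K_j$ endpoint correctly.
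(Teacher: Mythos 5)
Your proposal is correct and takes essentially the same route as the paper's Appendix B: both apply the chain rule of conditional probability together with Lemma~\ref{Lem:CodedTransmissionCoverageProbability} so that the cumulative success probabilities become $C_k(\tau)=\prod_{i=1}^{k}q_i(\tau)$, and both collapse the resulting weighted telescoping sum into $\frac{m_j}{n}\sum_{k=1}^{\lceil n/m_j\rceil}C_k(\tau)+\bigl(1-\frac{m_j}{n}\bigl\lceil\frac{n}{m_j}\bigr\rceil\bigr)C_{\lceil n/m_j\rceil}(\tau)$. The only cosmetic difference is that you identify the failure-at-layer-$k$ probability as $C_{k-1}(\tau)-C_k(\tau)$ up front and make the Abel-summation step explicit, whereas the paper carries powers of $Q_{\tau}$ through the same algebra implicitly.
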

\begin{proof}
Please see Appendix B.
\end{proof}

By substituting \eqref{eqn:R_c} into \eqref{eqn:theaveragebackhaulrateofcodedcachingscheme}, a closed-form expression of AFOT for any given caching vector $\mathbf{m}$ is readily obtained.

Now we analyze the properties of $L[m_j]$, which will be essential for cache placement optimization in the next section.


\begin{Lem}\label{Lem:R-mj-piecewise}
$L[m_j]$ is a piece-wise arithmetic sequence, for $m_j \in\mathcal{N}$.
\end{Lem}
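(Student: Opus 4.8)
The plan is to partition the index set $\mathcal{N}^+$ according to the value of the ceiling term $\bigl\lceil\frac{n}{m_j}\bigr\rceil$ and to show that, within each piece, the expression for $L[m_j]$ from Lemma~\ref{Lem:BRofCodedCaching} collapses to an affine function of $m_j$. Since an affine function sampled at consecutive integers has constant successive differences, this is exactly the assertion that $L[m_j]$ is arithmetic on each piece.

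First I would fix a positive integer $K$ and collect all $m_j\in\mathcal{N}^+$ for which $\bigl\lceil\frac{n}{m_j}\bigr\rceil=K$. By definition of the ceiling, this condition is equivalent to $\frac{n}{K}\le m_j<\frac{n}{K-1}$, so the qualifying indices form a contiguous run of consecutive integers; letting $K$ vary over its admissible values, the resulting nonempty runs partition the whole domain $\{1,2,\ldots,n\}$. Verifying that the ceiling function induces such a clean tiling is the one bookkeeping point requiring care, but it follows directly from the inequality above.

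Next, on a piece where $\bigl\lceil\frac{n}{m_j}\bigr\rceil=K$ is held fixed, the summation $\sum_{k=1}^{\lceil n/m_j\rceil}C_k(\tau)$ becomes the constant $S_K\triangleq\sum_{k=1}^{K}C_k(\tau)$, independent of $m_j$. Substituting this into the second branch of \eqref{eqn:R_c} and grouping the terms that are linear in $m_j$, I expect the clean form
\begin{align}
L[m_j]=C_K(\tau)+\frac{m_j}{n}\bigl(S_K-K\,C_K(\tau)\bigr),
\end{align}
which is manifestly affine in $m_j$ with slope $\frac{1}{n}\bigl(S_K-K\,C_K(\tau)\bigr)$ held constant throughout the piece.

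Finally, since consecutive samples of this affine form differ by the constant $\frac{1}{n}\bigl(S_K-K\,C_K(\tau)\bigr)$, the restriction of $L[m_j]$ to each piece is an arithmetic sequence with that common difference. Together with the boundary value $L[0]=0$ from the first branch of \eqref{eqn:R_c}, this establishes that $L[m_j]$ is a piece-wise arithmetic sequence over $m_j\in\mathcal{N}$, as claimed. The algebraic reduction to the affine form is routine; the only genuine obstacle is the careful handling of the ceiling-induced partition so that the word \emph{piece-wise} is well-defined.
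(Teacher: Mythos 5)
Your proposal is correct and takes essentially the same approach as the paper: both partition $\mathcal{N}^+$ into runs where $\bigl\lceil\frac{n}{m_j}\bigr\rceil$ is constant (the paper's sets $\mathcal{M}_t$, with the $t=1$ case handled separately as $\mathcal{M}_1=\{n\}$) and then observe that $L[m_j]$ restricted to such a run has constant first difference $\frac{1}{n}\bigl(\sum_{k=1}^{t}C_k(\tau)-t\,C_t(\tau)\bigr)$, which is exactly the slope of your affine form. The only cosmetic difference is that the paper computes the difference $L[m_j]-L[m_j-1]$ directly, whereas you exhibit the affine expression first; the two are equivalent.
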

\begin{proof}
Define an integer set $\mathcal{M}_t \triangleq\{m|\frac{n}{t}\leq m< \frac{n}{t-1},m\in \mathbb{N^+}\}, \forall t\in\mathcal{T}_n \triangleq \{t|t=\bigl\lceil\frac{n}{m_j}\bigr\rceil, \forall m_j \in \mathcal{N^+}\}\setminus \{1\}$ and $\mathcal{M}_1 \triangleq \{n\}$.
 For example, $\mathcal{M}_2=\{4,5,6,7\}$ and $\mathcal{M}_3=\{3\}$ when $n=8$.
 Then, we have $\bigl\lceil\frac{n}{m_j}\bigr\rceil = t$, $\forall m_j \in \mathcal{M}_t$.
Thus, for all $m_j$'s such that $\{m_j,m_j-1\}\subseteq\mathcal{M}_t$, the first difference of $L[m_j]$ is given by
\begin{align}\label{eqn:difference-R-c}
 d_t &\triangleq L[m_j]-L[m_j-1] = \frac 1n\left(\sum\limits_{k=1}^t C_k(\tau) - tC_t(\tau)\right).
\end{align}
Equation \eqref{eqn:difference-R-c} indicates that when multiple $m_j$'s all belong to $\mathcal{M}_t$, the difference of $L[m_j]$, i.e., $d_t$, is same and thus $L[m_j]$ is an arithmetic sequence.
If $m_j \in \mathcal{M}_t$ but $m_j-1\in \mathcal{M}_{t'}$, for $t' \ne t$, then we denote the first difference as $d_{t,t'} \triangleq L[m_j] - L[m_j-1]$. Thus, $L[m_j]$ is a piece-wise arithmetic sequence with first difference given by $d_t$ when $\{m_j, m_j-1\} \in \mathcal{M}_t$ or $d_{t,t'}$ when $m_j \in \mathcal{M}_t$ and $m_j-1 \in \mathcal{M}_{t'}$. The Lemma is thus proven.
\end{proof}


\begin{table}
\centering
\caption{Offloaded traffic difference table at $n=8$.}\label{Table:1}
\begin{tabular}{|c|c|c|c|c|c|c|c|c|c|}
\hline
$m_j$ & 0 & 1 & 2 & 3 & 4 & 5 & 6 & 7 & 8 \\
\hline
$\bigl\lceil\frac{n}{m_j}\bigr\rceil$ & $\backslash$ & 8 & 4 & 3 & \multicolumn{4}{c|}{2} & 1 \\
\hline
$\delta_j$ & $\backslash$ & $d_{8}$ & $d_{4,8}$ & $d_{3,4}$ & $d_{2,3}$ & \multicolumn{4}{c|}{$d_2$} \\
\hline
\end{tabular}
\vspace{0.08in}
\end{table}

By Lemma~\ref{Lem:R-mj-piecewise} and its proof, we can construct a difference table of $L[m_j]$, for $m_j \in\mathcal{N}$, termed as the \emph{offloaded traffic difference table}.
An example with $n=8$ is shown in TABLE~\ref{Table:1}.
Here, $\delta_j = L[m_j] - L[m_j - 1]$.
For any given $n$, an offloaded traffic difference table can be constructed.
We shall use this table to design a greedy-based optimal algorithm and a low-complexity algorithm with high performance for the cache placement problem for AFOT maximization in the next section.
The total number of distinct $\delta_j$'s for general $n$, denoted as $N_n$, depends on the size of set $\mathcal{T}_n$, i.e., the number of piece-wise regions for $m_j$, which is usually much smaller than $n$.
For example,  $N_8 = 5$, $N_{16} = 8$ and $N_{32} = 12$. When $n\to\infty$, $N_n\to 2\bigl\lceil\sqrt n\bigr\rceil$.
The proof is skipped due to the page limit.

Now, we consider the limiting case where $n\to \infty$. That means the file size $B$ is very large so that each file can be split into infinite number of fragments.
Define a new variable $x_j=\frac {m_j}n$. When $n\to\infty$, $x_j$ can be relaxed as a continuous variable within $[0,1]$. Then, the discrete function $L[m_j]$ in \eqref{eqn:R_c} can be relaxed as a continuous function $L(x_j)$:
\begin{align}\label{}
  L(x_j) = x_j\sum\limits_{k=1}^{\bigl\lceil\frac1{x_j}\bigr\rceil} C_k(\tau) + \left(1-x_j\biggl\lceil\frac1{x_j}\biggr\rceil\right)C_{\bigl\lceil\frac1{x_j}\bigr\rceil}(\tau).
\end{align}

The properties of $L(x_j)$ are summarized in the following lemma.
\begin{Lem}\label{Lem:R-xj-increase-concave}
 $L(x_j)$ is an increasing, concave and piece-wise linear function of $x_j$, for $0\leq x_j\leq 1$.
\end{Lem}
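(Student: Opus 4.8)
The plan is to exploit that $L(x_j)$ is assembled from the single sequence $C_k(\tau)$ appearing in Lemma~\ref{Lem:BRofCodedCaching}. Write $\beta \triangleq 1 + \frac{2}{\alpha}\tau^{2/\alpha}B'\bigl(\frac{2}{\alpha},1-\frac{2}{\alpha},\frac{1}{1+\tau}\bigr)$, so that $C_k(\tau) = \beta^{-k(k+1)/2}$. Since the complementary incomplete Beta integral is strictly positive we have $\beta > 1$, hence $\{C_k(\tau)\}$ is a strictly positive, strictly decreasing sequence with $C_k(\tau)\to 0$. First I would pin down the linear pieces: for every integer $t\ge 2$ the condition $\lceil\frac{1}{x_j}\rceil = t$ holds exactly when $x_j\in I_t \triangleq [\frac{1}{t},\frac{1}{t-1})$ (with the degenerate piece $I_1=\{1\}$), and substituting the constant value $t$ into the definition collapses $L(x_j)$ to the affine form
\[
L(x_j) = s_t\,x_j + C_t(\tau), \qquad s_t \triangleq \sum_{k=1}^{t} C_k(\tau) - t\,C_t(\tau) = \sum_{k=1}^{t-1}\bigl(C_k(\tau)-C_t(\tau)\bigr),
\]
which establishes the piecewise-linear claim at once.

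For monotonicity, each slope $s_t$ is strictly positive when $t\ge 2$ because $C_k(\tau) > C_t(\tau)$ for all $k<t$. To lift this to the whole function I would verify continuity across the breakpoints $x_j=\frac{1}{t-1}$: the prefactor $1 - x_j\lceil\frac{1}{x_j}\rceil$ vanishes precisely when $\frac{1}{x_j}$ is an integer, so the one-sided expressions inherited from $I_t$ and $I_{t-1}$ both reduce to $\frac{1}{t-1}\sum_{k=1}^{t-1}C_k(\tau)$ and agree. A continuous piecewise-linear function with positive slope on every piece is increasing.

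Concavity of a continuous piecewise-linear function is equivalent to its slopes being non-increasing in $x_j$. A short telescoping computation yields $s_{t+1}-s_t = t\bigl(C_t(\tau)-C_{t+1}(\tau)\bigr) > 0$, so $s_t$ strictly increases in $t$; because a larger $x_j$ corresponds to a smaller $t$, the slope strictly decreases as $x_j$ grows, and combined with the continuity established above this gives concavity.

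The calculations themselves are routine; the only point deserving care is the bookkeeping at the infinitely many breakpoints accumulating at $x_j = 0$. Rather than invoking a single second-derivative argument, I would reason locally, confirming continuity at each $\frac{1}{t-1}$ and comparing the slopes of adjacent pieces. The slopes stay bounded, $s_t \uparrow \sum_{k\ge 1}C_k(\tau) < \infty$ by the super-exponential decay of $C_k(\tau)$, so all three properties hold uniformly on $[0,1]$.
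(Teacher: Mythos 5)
Your proof follows essentially the same route as the paper's Appendix C: identify the affine pieces of $L(x_j)$ on the intervals $\left[\frac{1}{t},\frac{1}{t-1}\right)$, show each slope $s_t=\sum_{k=1}^{t}C_k(\tau)-t\,C_t(\tau)$ is positive because $C_k(\tau)$ is decreasing, and establish concavity from the slope difference $s_{t+1}-s_t=t\left(C_t(\tau)-C_{t+1}(\tau)\right)>0$, which is exactly the paper's derivative comparison with the sign convention reversed. The only difference is that you additionally verify continuity at the breakpoints $x_j=\frac{1}{t-1}$ and the boundedness of the slopes near the accumulation point $x_j=0$ --- steps the paper leaves implicit but which are genuinely needed to pass from piecewise slope information to global monotonicity and concavity --- so your argument is correct and, if anything, slightly more complete.
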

\begin{proof}Please see Appendix C.
\end{proof}

For any given $n$, $L[m_j]$ can be viewed as the sampled function of $L(x_j)$ at sampling instance $x_j=\frac{m_j}n$.
Since $L(x_j)$ is increasing and concave by Lemma~\ref{Lem:R-xj-increase-concave}, $L[m_j]$ is also increasing and concave.
We thus have the following corollary.
\begin{Cor}
\label{Cor:Delta-mj-decreasing}
 $L[m_j]$ is an increasing and concave sequence, i.e., $0<L[m_j]-L[m_j-1]\leq L[m_j-1]-L[m_j-2]$, for $m_j \in\{2,3,\ldots,n\}$.
\end{Cor}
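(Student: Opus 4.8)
The plan is to derive Corollary~\ref{Cor:Delta-mj-decreasing} directly from Lemma~\ref{Lem:R-xj-increase-concave} by exploiting the sampling relationship between the continuous function $L(x_j)$ and the discrete sequence $L[m_j]$. The key observation is that $L[m_j] = L(x_j)$ exactly when $x_j = \frac{m_j}{n}$, so the discrete values are samples of the continuous function taken at the equally spaced points $\frac{1}{n}, \frac{2}{n}, \ldots, 1$. Since Lemma~\ref{Lem:R-xj-increase-concave} already establishes that $L(x_j)$ is increasing and concave on $[0,1]$, I would transfer both properties to the sampled sequence.

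First I would handle the increasing part. Because $L(x_j)$ is strictly increasing on $[0,1]$ and the sample points satisfy $\frac{m_j-1}{n} < \frac{m_j}{n}$, monotonicity immediately gives $L[m_j] > L[m_j-1]$, i.e., the first difference $L[m_j]-L[m_j-1] > 0$ for every $m_j \in \{1,2,\ldots,n\}$. Next I would address concavity. The standard fact here is that sampling a concave function at equally spaced abscissae produces a concave sequence: if $f$ is concave and $a < b < c$ are equally spaced so that $b = \frac{a+c}{2}$, then $f(b) \geq \frac{f(a)+f(c)}{2}$, which rearranges to $f(c)-f(b) \leq f(b)-f(a)$. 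Applying this with $a = \frac{m_j-2}{n}$, $b = \frac{m_j-1}{n}$, $c = \frac{m_j}{n}$ (three consecutive, equally spaced sample points) yields
\begin{align}\label{eqn:cor-concave-step}
  L[m_j] - L[m_j-1] \leq L[m_j-1] - L[m_j-2], \quad m_j \in \{2,3,\ldots,n\}.
\end{align}
Combining the strict positivity of the first differences with this decreasing-difference inequality gives exactly the chain $0 < L[m_j]-L[m_j-1] \leq L[m_j-1]-L[m_j-2]$ claimed in the corollary.

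I do not anticipate a serious obstacle, since this is essentially a corollary rather than an independent result: the analytical heavy lifting (establishing that $L(x_j)$ is increasing, concave, and piece-wise linear) has already been carried out in Lemma~\ref{Lem:R-xj-increase-concave} and relegated to Appendix~C. The only point requiring mild care is making the sampling argument rigorous, namely justifying that evaluating the concave continuous relaxation at the three consecutive grid points $\frac{m_j-2}{n}, \frac{m_j-1}{n}, \frac{m_j}{n}$ is legitimate even across the boundaries of the piece-wise linear segments of $L(x_j)$. Since concavity is a global property holding uniformly on $[0,1]$ (not merely within each linear piece), the midpoint inequality above remains valid regardless of whether the three sample points fall in the same linear segment or straddle a breakpoint, so no case distinction on the location of $m_j$ relative to the sets $\mathcal{M}_t$ is needed. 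Thus the proof reduces to two short invocations of the monotonicity and concavity already guaranteed by Lemma~\ref{Lem:R-xj-increase-concave}.
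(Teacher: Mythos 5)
Your proposal is correct and follows essentially the same route as the paper: the text preceding Corollary~\ref{Cor:Delta-mj-decreasing} derives it exactly by viewing $L[m_j]$ as the sampled version of $L(x_j)$ at $x_j=\frac{m_j}{n}$ and invoking the monotonicity and concavity established in Lemma~\ref{Lem:R-xj-increase-concave}. Your write-up merely makes the transfer explicit (the equally spaced midpoint inequality for concavity and strict positivity of the derivative from Appendix~C for the strict increase), which is a sound filling-in of details the paper leaves implicit.
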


\begin{figure}[tb]
\begin{center}
  \includegraphics[width=9.5cm]{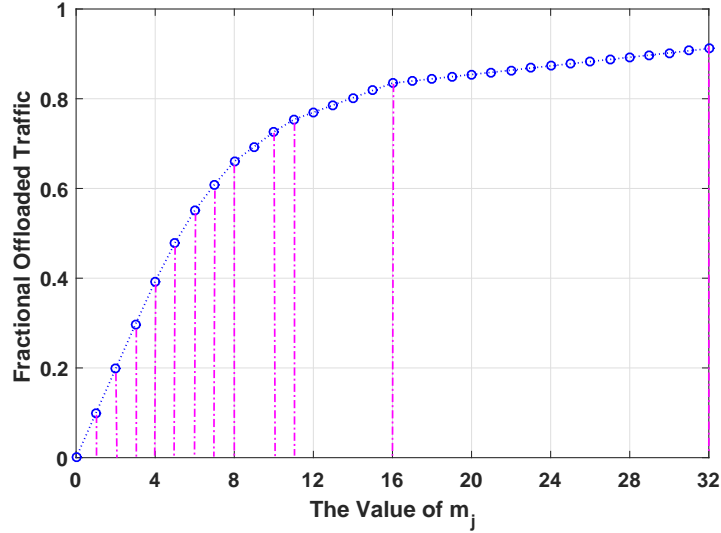}\\
  \caption{Fractional offloaded traffic $L[m_j]$ with $\alpha=4$, $\tau=0.1$, $n=32$.}
  \label{Fig:linear-discrete}
\end{center}
\end{figure}

Fig.~\ref{Fig:linear-discrete} depicts a numerical example of $L[m_j]$ when $n=32$.
Notice that $d_{t,t+1}-d_{t+1} = \left(1-t\frac{m_j}n\right)(C_t(\tau)-C_{t+1}(\tau))$ when $t$ and $t+1$ both belong to $\mathcal T_{n}$. Thus, $d_{1,2}=d_2$ and $d_{2,3}=d_3$ when $n=32$.
Fig.~\ref{Fig:linear-discrete} clearly demonstrates the properties in Lemma~\ref{Lem:R-mj-piecewise} and Corollary~\ref{Cor:Delta-mj-decreasing}.

\subsection{Analysis of Ergodic Rate $R[m_j]$} \label{subsec:AnalysisoftheErgodicRate}
By using the definition in \eqref{eqn:definition-ergodic-rate}, the ergodic rate $R[m_j]$ of $u_0$ when requesting $W_j$ can be expressed in a tractable form below.
\begin{Lem}
\label{Lem:ErgodicRateofCodedCaching}
 The ergodic rate for $W_j$ with caching parameter $m_j$ is given by
\begin{align}\label{eqn:Ergodic-Rate}
R[m_j] =
\begin{cases}
0, &m_j=0, \\
\bigl\lceil\frac{n}{m_j}\bigr\rceil \int_0^{\infty} C_{\lceil\frac{n}{m_j}\rceil}\left(2^r-1\right) \mathrm{d}r,  &m_j \in \mathcal{N^+},
\end{cases}
\end{align}
where $C_{\lceil\frac{n}{m_j}\rceil}\left(2^r-1\right)$ is given in~\eqref{eqn:c_k_tau} by letting $k=\lceil\frac{n}{m_j}\rceil$ and $\tau=2^r-1$.
\end{Lem}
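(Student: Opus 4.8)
The plan is to start from the definition of the ergodic rate in \eqref{eqn:definition-ergodic-rate}. For $m_j\in\mathcal{N^+}$, set $t=\bigl\lceil\frac{n}{m_j}\bigr\rceil$ and focus on the expectation $\mathbb{E}\left[\log\left(1+\min\left(\text{SIR}_1,\ldots,\text{SIR}_{t}\right)\right)\right]$. The standard trick for a nonnegative random variable $Y$ is the tail-integral identity $\mathbb{E}[\log(1+Y)] = \int_0^\infty \text{Pr}\left[\log(1+Y) > r\right]\mathrm{d}r = \int_0^\infty \text{Pr}\left[Y > 2^r-1\right]\mathrm{d}r$, where I have used base-2 logarithms consistent with the bits/s/Hz units. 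Applying this with $Y=\min\left(\text{SIR}_1,\ldots,\text{SIR}_{t}\right)$, the integrand becomes $\text{Pr}\left[\min\left(\text{SIR}_1,\ldots,\text{SIR}_{t}\right) > 2^r-1\right]$, which is exactly the joint event $\text{Pr}\left[\bigcap_{i=1}^{t}\text{SIR}_i > 2^r-1\right]$.

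The next step is to recognize this joint probability as the object already computed in the earlier lemmas. By the definition of the success probabilities in \eqref{eqn:definition-q-k} and the chain-rule factorization $\text{Pr}\left[\bigcap_{i=1}^{t}\text{SIR}_i\geq\tau\right] = \prod_{k=1}^{t} q_k(\tau)$, together with the closed form of $q_k(\tau)$ from Lemma~\ref{Lem:CodedTransmissionCoverageProbability}, the joint probability telescopes. Indeed, multiplying $q_k(\tau)=\left(1+\frac2{\alpha}\tau^{\frac2{\alpha}}B'\left(\frac2{\alpha},1-\frac2{\alpha},\frac1{1+\tau}\right)\right)^{-k}$ over $k=1,\ldots,t$ produces the exponent $\sum_{k=1}^{t}k = \frac{t(t+1)}2$, which is precisely the definition of $C_t(\tau)$ in \eqref{eqn:c_k_tau}. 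Hence $\text{Pr}\left[\bigcap_{i=1}^{t}\text{SIR}_i\geq\tau\right] = C_t(\tau)$, and evaluating at the threshold $\tau = 2^r-1$ gives the integrand $C_t\left(2^r-1\right)$.

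Combining these, the inner expectation equals $\int_0^\infty C_t\left(2^r-1\right)\mathrm{d}r$, and multiplying by the $\bigl\lceil\frac{n}{m_j}\bigr\rceil=t$ prefactor in \eqref{eqn:definition-ergodic-rate} yields $R[m_j] = \bigl\lceil\frac{n}{m_j}\bigr\rceil\int_0^{\infty} C_{\lceil\frac{n}{m_j}\rceil}\left(2^r-1\right)\mathrm{d}r$, which is exactly \eqref{eqn:Ergodic-Rate}. The $m_j=0$ case is immediate from the definition. The one subtlety I would flag as the main point requiring care is the step $\text{Pr}\left[\bigcap_{i=1}^{t}\text{SIR}_i\geq\tau\right]=C_t(\tau)$: this uses the chain-rule decomposition into the conditional success probabilities $q_k(\tau)$ and the approximation established in Lemma~\ref{Lem:CodedTransmissionCoverageProbability}, so the resulting expression for $R[m_j]$ inherits the same tightness rather than being exact. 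A minor secondary point is justifying the interchange of expectation and the tail integral, which is routine since the integrand is nonnegative and monotone (Tonelli's theorem), and confirming the logarithm base so that $2^r-1$ appears rather than $e^r-1$; this is consistent with the $\text{SIR}$ threshold $\tau=2^r-1$ stated in the lemma.
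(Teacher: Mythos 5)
Your proof is correct and follows essentially the same route as the paper's Appendix D: the tail-integral identity $\mathbb{E}[\log(1+Y)]=\int_0^\infty \text{Pr}[Y>2^r-1]\,\mathrm{d}r$, reduction to the joint event $\bigcap_{i=1}^{t}\{\text{SIR}_i\geq 2^r-1\}$, factorization, and identification of the product with $C_t(2^r-1)$. The only cosmetic difference is where the approximation enters: the paper invokes the independence assumption to factorize the joint probability directly into the marginals $\prod_k \text{Pr}[\text{SIR}_k\geq 2^r-1]$, whereas you use the exact chain-rule factorization into the conditionals $q_k$ and then apply Lemma~\ref{Lem:CodedTransmissionCoverageProbability}'s approximation to each factor---the same approximation, the same final expression, and your remark that the result inherits this approximation (rather than being exact) is accurate.
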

\begin{proof}
Please see Appendix D.
\end{proof}


By substituting \eqref{eqn:Ergodic-Rate} into \eqref{eqn:definition-average-ergodic-rate}, the expression of AER for any given caching vector $\mathbf{m}$ can be obtained.
Unlike the expression of FOT $L[m_j]$ in Lemma~\ref{Lem:BRofCodedCaching}, which is in closed-form, the expression of ergodic rate $R[m_j]$ involves integral, though still tractable.
It is thus difficult to analyze its properties.

\section{Optimization of Coded Caching}
\label{sec:OptimizationofCodedCaching}

In this section, we would like to maximize the AFOT and AER, respectively,  by optimizing the caching vector $\mathbf{m}$.
In the AFOT optimization, we first study the discrete optimization problem when the coding parameter $n$ is finite, and then study the continuous optimization problem when $n \to \infty$.
In the AER optimization, we only consider the discrete case when $n$ is finite.

\subsection{Discrete Problem for AFOT Optimization}
\label{subsec:DiscreteOptimizationonMaximizingAFOT}
The objective is to maximize the AFOT in~\eqref{eqn:theaveragebackhaulrateofcodedcachingscheme} by optimizing the caching vector $\mathbf m$ subject to the cache size constraint~\eqref{eqn:discrete-constraint-1} at a given finite $n$. This is formulated as follows.
\begin{Prob}[Discrete Coded Cache Problem for AFOT Maximization]
\label{Prob:Codedcachingoptimization}
\begin{align}\label{}
  \max_{\mathbf m}\quad &\sum\limits_{j=1}^F p_jL[m_j] \nonumber\\
  s.t.\quad &\sum\limits_{j=1}^F \frac{m_j}n \leq M, \nonumber\\
 &m_j \in \mathcal{N}, \forall j\in \mathcal{F}. \label{eqn:discrete-constraint-2}
\end{align}
\end{Prob}

 Before solving Problem~\ref{Prob:Codedcachingoptimization}, we shall present a property of the optimal caching vector $\mathbf{m^*}$.

 \begin{Thm}\label{Thm:MorePopularMorePackets}
 The optimal caching vector $\mathbf{m^*}$ of Problem~\ref{Prob:Codedcachingoptimization} satisfies that $m_1^*\geq m_2^*\geq \cdots \geq m_F^*$, for any given file popularities $p_1 \geq p_2 \geq \cdots\geq p_F$.
 \end{Thm}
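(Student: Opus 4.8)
The plan is to prove this by a classical exchange (rearrangement) argument, exploiting only two facts: that the cache-size constraint $\sum_{j=1}^F m_j/n \le M$ depends on the caching vector solely through the multiset of its entries, and that $L[m_j]$ is increasing in $m_j$ (Corollary~\ref{Cor:Delta-mj-decreasing}). Let $\mathbf{m}^*$ be any optimal solution of Problem~\ref{Prob:Codedcachingoptimization}. I would argue by contradiction: suppose the ordering is violated, so that there exist indices $i<j$ with $m_i^* < m_j^*$; by the assumed popularity ordering we then have $p_i \ge p_j$ while the cached amounts are in the wrong order.

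Next I would construct a new vector $\tilde{\mathbf{m}}$ by swapping these two entries, setting $\tilde m_i = m_j^*$, $\tilde m_j = m_i^*$, and $\tilde m_k = m_k^*$ for all $k \notin \{i,j\}$. This swap keeps every entry in $\mathcal{N}$ and leaves $\sum_k \tilde m_k / n = \sum_k m_k^*/n$ unchanged, so $\tilde{\mathbf m}$ remains feasible. The resulting change in the objective is
\begin{align}
\sum_{k=1}^F p_k L[\tilde m_k] - \sum_{k=1}^F p_k L[m_k^*] = (p_i - p_j)\bigl(L[m_j^*] - L[m_i^*]\bigr).
\end{align}
Since $p_i \ge p_j$ and, by the monotonicity of $L$ together with $m_j^* > m_i^*$, we have $L[m_j^*] \ge L[m_i^*]$, this difference is nonnegative. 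Hence $\tilde{\mathbf m}$ attains an objective value at least as large as $\mathbf{m}^*$ and is therefore also optimal.

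To conclude, I would apply this swap repeatedly: each swap strictly reduces the number of out-of-order pairs $(i,j)$ with $i<j$ and $m_i < m_j$, a quantity bounded above by $\binom{F}{2}$, while preserving both feasibility and optimality. After finitely many swaps the process terminates at an optimal caching vector satisfying $m_1^* \ge m_2^* \ge \cdots \ge m_F^*$, which proves the claim.

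The point that warrants the most care—rather than a genuine technical obstacle—is the reading of ``the'' optimal vector: the exchange argument shows that there always exists an optimal solution with the monotone ordering, and that any violating optimum can be converted into one without loss of objective value. Strict concavity of $L$ (Corollary~\ref{Cor:Delta-mj-decreasing}) is not needed for the argument, since monotonicity of $L$ together with the symmetry of the cache-size constraint already suffices; I would note, however, that concavity could be invoked if one wished to sharpen the statement to rule out out-of-order optima entirely.
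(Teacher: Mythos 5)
Your proposal is correct and uses essentially the same exchange (swap) argument as the paper's Appendix~E, which compares $p_1 L[m_1]+p_2 L[m_2]$ against the swapped assignment and invokes the monotonicity of $L[m_j]$. In fact your version is somewhat more complete than the paper's two-file sketch: it handles arbitrary $F$ by holding the other entries fixed, treats ties $p_i=p_j$ correctly via the repeated-swap/inversion-counting termination argument, and rightly observes that with equal popularities the theorem should be read as asserting the \emph{existence} of a monotone optimal caching vector.
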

\begin{proof}Please see Appendix E.
\end{proof}
Theorem \ref{Thm:MorePopularMorePackets} matches the common intuition that more cache size should be allocated to more popular contents when the total cache size is limited.

Note that Problem~\ref{Prob:Codedcachingoptimization} is a multiple-choice knapsack problem (MCKP) \cite[Chapter 11]{kellerer2004}, which is known to be NP-hard.
The problem can be interpreted as follows.
Consider a knapsack with capacity of $Mn$ packets and a set of $F$ files each containing $n$ packets.
The size of each file packet is the same but the profit of each file packet varies according to the number of packets as well as the index of file.
Based on the properties of $L[m_j]$ analyzed in Section~\ref{subsec:AnalysisandPropertiesofFOT} and the property of $\mathbf{m^*}$ in Theorem~\ref{Thm:MorePopularMorePackets}, we design a greedy but optimal algorithm as outlined in Algorithm~\ref{alg:MKP} to solve Problem~\ref{Prob:Codedcachingoptimization}.
\begin{algorithm}[tb]
\caption{Optimal Algorithm for Problem~\ref{Prob:Codedcachingoptimization}}
\label{alg:MKP}
\begin{algorithmic}[1]
\STATE \textbf{initialize}
  \begin{itemize}
  \item Store $n$ packets for each of the $M$ most popular files in each SBS, i.e., $\mathbf m=[\underbrace {n,\cdots,n}_M, \underbrace {0,\cdots,0}_{F-M}]$;
      \item Construct a offloaded traffic difference table for the given $n$;
      \item Set $i=M+1$;
  \end{itemize}
\WHILE{$i\leq F$}
\label{algprithm1:2}
\IF{$m_i<n$}
\label{algprithm1:3}
\STATE Find the forward FOT difference $\delta_i$ in the table when $m_i$ increases by one;
\FORALL{$j \in \{1,2,\ldots,i-1\}$}
\STATE Find the backward FOT difference $\delta_j$ in the table when $m_j$ decreases by one;
\ENDFOR
\STATE Let $j'=\mathop{\argmin}_{j \in \{1,2,\ldots,i-1\}}{p_j\delta_j}$;
\IF{$p_i\delta_i>p_{j'}\delta_{j'}$}
\STATE Set $m_i = m_i+1$, $m_{j'} = m_{j'}-1$ and goto step \ref{algprithm1:3};
\ELSE
\STATE When $m_i>0$, goto step \ref{algprithm1:16};
\STATE When $m_i=0$, terminate the algorithm; \label{algprithm1:17}
\ENDIF
\ENDIF
\STATE Set $i=i+1$; \label{algprithm1:16}
\ENDWHILE
\end{algorithmic}
\end{algorithm}

Algorithm~\ref{alg:MKP} initializes the caching vector $\mathbf m$ by storing all the $n$ packets of the $M$ most popular files.
It then gradually updates the $Mn$ cached packets in a greedy manner by replacing an exiting packet of a more popular file in the knapsack with a new packet of a less popular file towards the direction of maximizing the total profit, i.e., AFOT.
More specifically, if the increased profit, i.e., $p_i \left(L[m_i+1] - L[m_i]\right)$ brought by adding one more packet from the $i$-th popular content into the cache is larger than the minimum of decreased profits, i.e., $\mathop{\min}_{j \in \{1,2,\ldots,i-1\}}{p_j \left(L[m_j] - L[m_j-1]\right) }$ caused by discarding one packet from any of the previous $i-1$ stored contents, the algorithm uses the new packet of $W_i$ to replace the existing packet with the minimum decreased profit, updates the caching vector, and continues to search new possible replacement.
The total profit monotonically increases in each update due to the monotonicity of $L[m_j]$ by Corollary~\ref{Cor:Delta-mj-decreasing}.
Also, each discarded packet during the replacement process will never be added again (hence never optimal) since its weighted FOT difference is always smaller than that of any newly added packet during subsequent processes as well as any remaining packet in the knapsack due to the concavity of $L[m_j]$ by Corollary~\ref{Cor:Delta-mj-decreasing}.
Thus, when the algorithm terminates, all the possible replacements are searched and the globally optimal caching vector $\mathbf m^*$ is obtained.

The offloaded traffic difference table constructed at the beginning of algorithm alleviates the computation burden of the forward/backward FOT differences during the iterations.
The maximum number of caching vector updates in Algorithm~\ref{alg:MKP} is $\min\left\{(n-1)(F-M), \sum_{i=1}^{n-1}\frac{nM}{i+1}\right\}$.

Based on Algorithm~\ref{alg:MKP}, we show that under certain conditions, the optimal coded caching will degenerate to the conventional most popular caching (MPC).
\begin{Thm}\label{Thm:Degenrate-MPC}
 For $n\geq 2$, when the following condition holds:
 \begin{align}\label{eqn:condition-degenrate-MPC}
   \frac{p_M}{p_{M+1}} \geq \frac{\sum_{k=1}^nC_k(\tau)}{C_1(\tau)-C_2(\tau)},
 \end{align}
 the optimal coded caching vector $\mathbf m^*$ of Problem~\ref{Prob:Codedcachingoptimization} is $\mathbf m^*=[\underbrace {n,\cdots,n}_M,\underbrace {0,\cdots,0}_{F-M}]$.
 \end{Thm}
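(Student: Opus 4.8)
The plan is to lean entirely on the optimality of Algorithm~\ref{alg:MKP}, which is established immediately before this theorem: the greedy procedure, initialized at the MPC vector $[\,\underbrace{n,\cdots,n}_M,\underbrace{0,\cdots,0}_{F-M}\,]$, returns the global optimum. Hence it suffices to show that, under condition~\eqref{eqn:condition-degenrate-MPC}, Algorithm~\ref{alg:MKP} performs no packet replacement and terminates at its initial vector. If so, that initial MPC vector is by construction the returned optimum $\mathbf m^*$.

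First I would evaluate the two quantities the algorithm compares on its very first iteration, namely $i=M+1$, where $m_{M+1}=0$ while $m_1=\cdots=m_M=n$. The forward FOT difference for file $M+1$ is $\delta_{M+1}=L[1]-L[0]$, and the backward FOT difference for each already-full file $j\le M$ is $\delta_j=L[n]-L[n-1]$. Both follow routinely from the closed form in Lemma~\ref{Lem:BRofCodedCaching}: since $\lceil n/1\rceil=n$ and $L[0]=0$, one gets $\delta_{M+1}=L[1]=\frac1n\sum_{k=1}^n C_k(\tau)$; and since $\lceil n/n\rceil=1$ and $\lceil n/(n-1)\rceil=2$ for $n\ge2$, one gets $L[n]=C_1(\tau)$ and $L[n-1]=\frac{n-1}{n}C_1(\tau)+\frac1n C_2(\tau)$, so that $\delta_j=L[n]-L[n-1]=\frac1n\bigl(C_1(\tau)-C_2(\tau)\bigr)$.

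Because the top-$M$ files share the identical backward difference $\frac1n(C_1(\tau)-C_2(\tau))$, the minimizer in Algorithm~\ref{alg:MKP} is $j'=M$, the least popular among them (as $p_M$ is the smallest of $p_1\ge\cdots\ge p_M$). The greedy step declines the replacement exactly when $p_{M+1}\delta_{M+1}\le p_M\delta_M$; substituting the two differences, cancelling the common factor $1/n$, and using $C_1(\tau)>C_2(\tau)$ (immediate from the definition in~\eqref{eqn:c_k_tau}) rearranges this inequality into precisely condition~\eqref{eqn:condition-degenrate-MPC}. Since $m_{M+1}=0$ at this moment, declining the replacement triggers the algorithm's immediate-termination branch (step~\ref{algprithm1:17}).

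The one step needing a genuine argument rather than calculation is justifying that failure at $i=M+1$ rules out every later improvement, so that halting at MPC is correct. Here I would invoke the popularity ordering: every file $i>M+1$ still has $m_i=0$ at termination, hence the same forward difference $\delta_i=\delta_{M+1}$ and the same candidate backward differences, but with $p_i\le p_{M+1}$; therefore $p_i\delta_i\le p_{M+1}\delta_{M+1}\le p_M\delta_M$, and no replacement is ever profitable. This is exactly the monotonicity reasoning that validates the early-termination branch of Algorithm~\ref{alg:MKP}. I expect this monotonicity-plus-early-termination argument to be the crux of the proof, whereas the difference evaluations are mechanical; together they show the algorithm halts at the MPC vector, which is thus optimal.
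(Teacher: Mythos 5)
Your proposal is correct and follows essentially the same route as the paper: both invoke the established global optimality of Algorithm~\ref{alg:MKP}, compute the first-iteration differences $\delta_{M+1}=L[1]=\frac1n\sum_{k=1}^nC_k(\tau)$ and $\delta_j=L[n]-L[n-1]=\frac1n\bigl(C_1(\tau)-C_2(\tau)\bigr)$, and show that condition~\eqref{eqn:condition-degenrate-MPC} is exactly the inequality $p_M\delta_M\geq p_{M+1}\delta_{M+1}$ under which the greedy step declines the swap at $i=M+1$ and terminates at the MPC vector. Your added justification of the early-termination branch (files $i>M+1$ have the same differences but smaller popularity) is a slightly more explicit version of what the paper leaves implicit in the algorithm's correctness argument, not a different method.
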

\begin{proof}
By observing Algorithm~\ref{alg:MKP}, it is seen that the optimal $\mathbf m^*$ remains as the initialized value in step 1 when the algorithm terminates at step 13 with $i=M+1$ and $m_{M+1}=0$. Thus, to prove this theorem, it is equivalent to proving the condition $p_{M} \delta_{M} \geq p_{M+1} \delta_{M+1}$ by step 9, i,e., $p_M\left(L[n]-L[n-1]\right)\geq p_{M+1}L[1]$.
Hence, we have
\begin{align}\label{eqn:proof-condition-degenrate-MPC}
  &p_M\left(L[n]-L[n-1]\right) - p_{M+1}L[1] \nonumber\\
  =&p_M\left(C_1(\tau) - \left(\frac{n-1}n\sum\limits_{k=1}^{2}C_k(\tau) +\left(1-2\frac{n-1}{n}\right)C_2(\tau)\right)\right) - p_{M+1}\left(\frac1n \sum\limits_{k=1}^nC_k(\tau)\right) \nonumber\\
  =&\frac1n \left(p_M \left(C_1(\tau)-C_2(\tau)\right) - p_{M+1}\left(\sum\limits_{k=1}^nC_k(\tau)\right)\right) \geq 0.
\end{align}
Transforming the condition~\eqref{eqn:proof-condition-degenrate-MPC}, we can get the condition~\eqref{eqn:condition-degenrate-MPC}.
\end{proof}

To gain more insights from Theorem~\ref{Thm:Degenrate-MPC}, we take a closer look at the condition~\eqref{eqn:condition-degenrate-MPC} and recall the definition of $C_k(\tau)$ in~\eqref{eqn:c_k_tau}.
We then have the following corollary.
\begin{Cor}
\label{Cor:tau-gamma-MPC}
   For $n\geq 2$ and $\alpha > 2$, we obtain that
   \begin{enumerate}
     \item if $p_M > p_{M+1}$, there exists a decoding threshold $\tau_0>0$ such that the coded caching degenerates into MPC in terms of the AFOT performance, at all $\tau \geq \tau_0$;
     \item if the file popularity follows the Zipf distribution with shape parameter $\gamma$, there exists a shape parameter $\gamma_0>0$ such that the optimal coded caching degenerates into MPC in terms of the AFOT performance, at all $\gamma \geq \gamma_0$, when $\tau>0$.
   \end{enumerate}
\end{Cor}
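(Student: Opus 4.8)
The plan is to reduce both claims to the sufficient condition \eqref{eqn:condition-degenrate-MPC} furnished by Theorem~\ref{Thm:Degenrate-MPC} and then study the behavior of its right-hand side. First I would introduce the shorthand $g(\tau)\triangleq 1+\frac{2}{\alpha}\tau^{2/\alpha}B'\!\left(\frac{2}{\alpha},1-\frac{2}{\alpha},\frac{1}{1+\tau}\right)$, so that $C_k(\tau)=g(\tau)^{-k(k+1)/2}$ by \eqref{eqn:c_k_tau}. The right-hand side of \eqref{eqn:condition-degenrate-MPC} then depends on $\tau$ only through the single variable $x=g(\tau)$, namely
\begin{align}\label{eqn:plan-h}
 h(x)\triangleq \frac{\sum_{k=1}^n x^{-k(k+1)/2}}{x^{-1}-x^{-3}}.
\end{align}
I would record the key limiting fact about $x=g(\tau)$: as $\tau\to\infty$ we have $\tau^{2/\alpha}\to\infty$ while $B'\!\left(\frac{2}{\alpha},1-\frac{2}{\alpha},\frac{1}{1+\tau}\right)\to B\!\left(\frac{2}{\alpha},1-\frac{2}{\alpha}\right)$, the complete Beta function, which is a positive finite constant for $\alpha>2$; hence $g(\tau)\to\infty$. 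Moreover, for any finite $\tau>0$ the integral $B'(\cdot)$ converges (since $\frac{2}{\alpha}<1$), so $g(\tau)$ is finite and $g(\tau)>1$.

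For Part~1, I would show $h(x)\to 1$ as $x\to\infty$. Writing out the exponents $k(k+1)/2=1,3,6,\ldots$, the numerator equals $x^{-1}\left(1+x^{-2}+x^{-5}+\cdots\right)$ and the denominator equals $x^{-1}\left(1-x^{-2}\right)$, so $h(x)=\frac{1+x^{-2}+x^{-5}+\cdots}{1-x^{-2}}$, which tends to $1$ as $x\to\infty$ and exceeds $1$ for every finite $x>1$. Composing with $g(\tau)\to\infty$ gives $h(g(\tau))\to 1$. Since $p_M>p_{M+1}$ gives $\frac{p_M}{p_{M+1}}>1$, the definition of the limit (with $\varepsilon=\frac{p_M}{p_{M+1}}-1$) yields a threshold $\tau_0>0$ such that $h(g(\tau))<\frac{p_M}{p_{M+1}}$ for all $\tau\geq\tau_0$. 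Then \eqref{eqn:condition-degenrate-MPC} holds and Theorem~\ref{Thm:Degenrate-MPC} forces $\mathbf{m}^*$ to be the MPC vector. Note that no monotonicity of $h\circ g$ is needed, only the limit.

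For Part~2, I would fix $\tau>0$, so the right-hand side of \eqref{eqn:condition-degenrate-MPC} is a finite constant $h(g(\tau))>1$. Under the Zipf law $p_j\propto j^{-\gamma}$, the left-hand side is $\frac{p_M}{p_{M+1}}=\left(1+\frac{1}{M}\right)^{\gamma}$, which is strictly increasing in $\gamma$ and diverges as $\gamma\to\infty$. Solving $\left(1+\frac{1}{M}\right)^{\gamma}\geq h(g(\tau))$ yields the explicit threshold $\gamma_0=\frac{\ln h(g(\tau))}{\ln(1+1/M)}$, which is positive because $h(g(\tau))>1$; for every $\gamma\geq\gamma_0$ the condition holds and degeneration follows from Theorem~\ref{Thm:Degenrate-MPC}.

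The hard part will be the clean evaluation of the $\tau\to\infty$ limit of the right-hand side of \eqref{eqn:condition-degenrate-MPC}: one must confirm that $g(\tau)\to\infty$, which hinges on $B'(\cdot)$ converging to the \emph{finite and nonzero} complete Beta value rather than vanishing, and then that the ratio of the triangular-number powers in \eqref{eqn:plan-h} tends to unity because the $k=1$ term $C_1(\tau)$ asymptotically dominates every higher-order term. Once these two asymptotics are in hand, the remainder is routine bookkeeping layered on top of Theorem~\ref{Thm:Degenrate-MPC}.
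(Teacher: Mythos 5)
Your proposal is correct, but it proves Part~1 by a genuinely different mechanism than the paper. Both arguments reduce the corollary to the sufficient condition \eqref{eqn:condition-degenrate-MPC} of Theorem~\ref{Thm:Degenrate-MPC} and both collapse the $\tau$-dependence into a single variable; the difference lies in how the right-hand side $\frac{\sum_{k=1}^n C_k(\tau)}{C_1(\tau)-C_2(\tau)}$ is controlled. The paper substitutes $x=\left(1+\frac2{\alpha}\tau^{2/\alpha}B'\left(\frac2{\alpha},1-\frac2{\alpha},\frac1{1+\tau}\right)\right)^{-1}\in(0,1)$, computes the derivative of the resulting rational function of $x$, shows it is positive term by term, and concludes that the right-hand side is a \emph{decreasing} function of $\tau$ with range $(1,+\infty)$, from which the threshold $\tau_0$ follows. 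You instead work with the reciprocal variable $x=g(\tau)\to\infty$, expand the triangular-number powers to show $h(x)\to 1$ as $x\to\infty$, and extract $\tau_0$ from the definition of the limit with $\varepsilon=\frac{p_M}{p_{M+1}}-1$, explicitly observing that no monotonicity is required. Your route is more elementary (no quotient-rule computation, no sign analysis of the numerator) and it makes explicit an asymptotic fact the paper leaves implicit in its range assertion $(1,+\infty)$: namely that $B'\left(\frac2{\alpha},1-\frac2{\alpha},\frac1{1+\tau}\right)$ converges to the finite, nonzero complete Beta value for $\alpha>2$, so that $g(\tau)\to\infty$. What the paper's derivative argument buys in exchange is strictly more information: monotonicity of the right-hand side in $\tau$ yields a sharp threshold structure (degeneration holds exactly on a half-line $[\tau_0,\infty)$ with $\tau_0$ characterized as the unique crossing point), whereas your limit argument only guarantees some $\tau_0$ beyond which the condition holds. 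Part~2 of your proof (fixing $\tau$, noting $\left(1+\frac1M\right)^{\gamma}$ is increasing and unbounded in $\gamma$, and solving for the explicit $\gamma_0=\frac{\ln h(g(\tau))}{\ln(1+1/M)}$) coincides with the paper's argument, with the minor bonus that you exhibit $\gamma_0$ in closed form.
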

\begin{proof}Please see Appendix F.
\end{proof}


\subsection{Continuous Problem for AFOT Optimization} \label{subsec:ContinuousOptimizationonMaximizingAFOT}
In this subsection, we study the maximization of AFOT by letting $n \to \infty$.
Based on that, we then design a low-complexity sub-optimal algorithm to solve Problem~\ref{Prob:Codedcachingoptimization}.
The continuous cache optimization problem for $n\to\infty$ is formulated as follows.
\begin{Prob}[Continuous Coded Cache Problem for AFOT Maximization]
\label{Prob:ContinuousOptimizationProblem}
\begin{align}\label{}
  \max_{\mathbf x}\quad &\sum\limits_{j=1}^F p_j L(x_j) \nonumber\\
  s.t.\quad &\sum\limits_{j=1}^F x_j \leq M, \label{eqn:constraint-continuous-1}\nonumber\\
  & 0 \leq x_j \leq 1, \forall j\in \mathcal{F}.
\end{align}
\end{Prob}
Since $L(x_j)$ is a concave function by Lemma~\ref{Lem:R-xj-increase-concave},  Problem~\ref{Prob:ContinuousOptimizationProblem} is a convex problem and hence can be solved efficiently by various methods, e.g., the interior point method.
The optimal AFOT of Problem~\ref{Prob:ContinuousOptimizationProblem} provides an upper bound of that of Problem~\ref{Prob:Codedcachingoptimization} for any $n$.

Next, we propose a low-complexity algorithm for Problem~\ref{Prob:Codedcachingoptimization} based on the optimal solution of $\mathbf x^*$ in Problem~\ref{Prob:ContinuousOptimizationProblem}, denoted as $\mathbf{x^*} \triangleq [x_1^*,x_2^*,\cdots,x_F^*]$.
At the initial step, let $m_j=\lceil nx_j^*\rceil$, $\forall j \in \mathcal F$.
Since the problem is convex, we have $\sum\limits_{j=1}^F x_j^* = M$.
Thus, we have $\sum\limits_{j=1}^F\frac{m_j}n \geq M$. Define the number of exceeding packets as
\begin{align}\label{eqn:N_e}
 N_e\triangleq \sum\limits_{j=1}^F m_j-Mn.
\end{align}
Now we refine $m_j$ by removing $N_e$ packets from each SBS to satisfy the cache size constraint. This process is outlined in Algorithm~\ref{alg:Near-optimal-Solution}.
\begin{algorithm}[tb]
\caption{Low-complexity Algorithm for Problem~\ref{Prob:Codedcachingoptimization}}
\label{alg:Near-optimal-Solution}
\begin{algorithmic}[1]
\STATE \textbf{initialize}
  \begin{itemize}
      \item Find the optimal $\mathbf{x}^*$ by solving Problem~\ref{Prob:ContinuousOptimizationProblem};
      \item Set $\mathbf{m}=\left[\lceil nx_1^*\rceil,\lceil nx_2^*\rceil,\cdots, \lceil nx_F^*\rceil\right]$;
      \item Compute $N_e$ as \eqref{eqn:N_e};
      \item Construct a offloaded traffic difference table for given $n$;
      \item Set $i=1$;
  \end{itemize}
\WHILE{$i\leq N_e$}
 \FORALL {$j\in \mathcal{F}$ and $m_j \ne 0$}
  \STATE Find the backward FOT difference $\delta_j$ in the table when $m_j$ decreases by one;
 \ENDFOR
 \STATE Let $j'=\mathop{\argmin}_{j \in \{i|m_i \ne 0\}}{p_j\delta_j}$; \\
 \STATE Set $m_{j'} = m_{j'}-1$;
 \STATE Set $i=i+1$;
\ENDWHILE
\end{algorithmic}
\end{algorithm}

Algorithm~\ref{alg:Near-optimal-Solution} initializes the caching vector $\mathbf m$ to be $m_j = \lceil nx_j^*\rceil$, for all $j\in \mathcal F$, which exceeds the cache size constraint by $N_e$ in
\eqref{eqn:N_e}.
Then, it gradually decreases coded packets one by one until the cache size constraint is satisfied.
Each time only the coded packet of the file with the minimum profit, i.e., $\mathop{\min}_{j \in \mathcal{F}}{p_j\delta_j}$, will be discarded.
The optimality of the algorithm cannot be guaranteed since if there exists a file $W_j$ whose initial $m_j$ is smaller than the global optimal $m_j^*$, then $m_j^*$ can never be reached since we only decrease packets for files.
However, if each initial $m_j\geq m_j^*$, the solution of Algorithm 2 is optimal.
Due to the monotonicity and concavity of $L[m_j]$ in Corollary~\ref{Cor:Delta-mj-decreasing}, the discarded packet causes the minimum loss in weighted FOT at each caching vector update among all the packets in each SBS.
Continuing Algorithm~\ref{alg:Near-optimal-Solution}, the packets with minimum loss in weighted FOT are discarded one by one.
Thus, Algorithm~\ref{alg:Near-optimal-Solution} can obtain the optimal solution with the maximum AFOT when the discarding process terminates.

Note that the maximum possible number of caching vector updates in Algorithm~\ref{alg:Near-optimal-Solution} is $F$, which is independent of $n$.
Hence Algorithm~\ref{alg:Near-optimal-Solution} is more computationally efficient than Algorithm~\ref{alg:MKP}.
We shall provide more detailed comparisons with Algorithm~\ref{alg:MKP} in Section~\ref{subsec:Complexity-Comparison}.

\subsection{Discrete Problem for AER Optimization}
\label{subsec:OptimizationonMaximizingAER}
In this subsection, we aim to maximize AER in~\eqref{eqn:definition-average-ergodic-rate} by optimizing the caching vector $\mathbf m$ subject to the cache size constraint~\eqref{eqn:discrete-constraint-1} at a given finite $n$.
This is formulated as:

\begin{Prob}[Coded Cache Problem for AER Maximization]
\label{Prob:CodedcachingonAverageErgodicRate}
\begin{align}\label{}
  \max_{\mathbf m}\quad &\sum\limits_{j=1}^F p_jR[m_j] \nonumber\\
  s.t.\quad &\sum\limits_{j=1}^F \frac{m_j}n \leq M, \nonumber\\
 &m_j \in \mathcal{N}, \forall j\in \mathcal{F}.
\end{align}
\end{Prob}

Monotonicity and concavity of $R[m_j]$ is unknown due to the integral expression of $R[m_j]$  in~\eqref{eqn:Ergodic-Rate}.
Thus, we transform it into a standard MCKP first, and then propose a heuristic algorithm to solve it.
\begin{Prob}[Standard MCKP for AER Maximization]
\label{Prob:MCKPonAverageErgodicRate}
\begin{align}\label{}
  \max_{\mathbf X}\quad &\sum\limits_{j=1}^F \sum\limits_{k=0}^n R_{jk}x_{jk} \nonumber\\
  s.t.\quad &\sum\limits_{j=1}^F \sum\limits_{k=0}^n w_{jk}x_{jk} \leq M, \nonumber\\
  &\sum\limits_{k=0}^n x_{jk} = 1, \forall j\in\mathcal{F}, \nonumber\\
  &x_{jk}\in\{0,1\}, \forall j\in\mathcal{F}, \forall k\in \mathcal{N},
\end{align}
where $\mathbf X = \{x_{j,k},j\in\mathcal{F},k\in \mathcal{N}\}$, $R_{jk} = p_jR[k]$, and $w_{jk} = \frac{k}{n}$.
\end{Prob}

As mentioned in \cite{kellerer2004}, we can relax the binary constraint on $x_{jk}$ and obtain the linear MCKP, which is a standard linear programming and can be solved efficiently.
Based on the optimal solution of the linear MCKP, we then propose a heuristic algorithm to solve the original Problem~\ref{Prob:CodedcachingonAverageErgodicRate} as shown in Algorithm~\ref{alg:Heuristic-Near-optimal-Solution}.
Note that we terminate the algorithm at step 7 in advance since the AER decreases after increasing one packet.

\begin{algorithm}[tb]
\caption{The Heuristic Algorithm for Problem~\ref{Prob:CodedcachingonAverageErgodicRate}}
\label{alg:Heuristic-Near-optimal-Solution}
\begin{algorithmic}[1]
\STATE \textbf{initialize}
  \begin{itemize}
      \item Find the optimal $x_{j,k}^*, j\in {\mathcal F}, k\in{\mathcal{N}}$ by solving the standard MCKP;
      \item Set $m_j = \mathop{\argmax}_{k \in \mathcal{N}}{x_{j,k}^*}, \forall j\in \mathcal F$;
      \item Compute the total packet size as $N_i = \sum_{j=1}^F\frac{m_j}n$;
  \end{itemize}
\IF {$N_i > M$}
 \STATE Apply the same method as Algorithm~\ref{alg:Near-optimal-Solution} to discard $n(N_i-M)$ packets;
\ELSE
 \STATE Let $j'=\mathop{\argmax}_{j\in \{i|m_i \ne n\}}{p_j\left(R[m_j+1]-R[m_j]\right)}$;
 \IF{$R[m_{j'}+1]-R[m_{j'}] < 0$}
  \STATE Terminate the algorithm;
  \ELSE
  \STATE Add one packet of $W_{j'}$ in each SBS;
  \STATE Increase $N_i$ by $\frac 1 n$ and goto step 5 unless $N_i = M$;
  \ENDIF
\ENDIF
\end{algorithmic}
\end{algorithm}

\section{Numerical Results}\label{sec:NumericalResults}
In this section, we illustrate the performance of our proposed coded caching scheme in SCNs using numerical examples.
Throughout this section, there are $F = 100$ files in total and the channel pathloss exponent is $\alpha = 4$.
The file popularity is modeled as the Zipf distribution with parameter $\gamma$.


\subsection{Successful Transmission Probability}
\label{subsec:SuccessfulTransmissionProbabilityValidation}
\begin{figure}[tb]
\begin{center}
  \includegraphics[width=9.5cm]{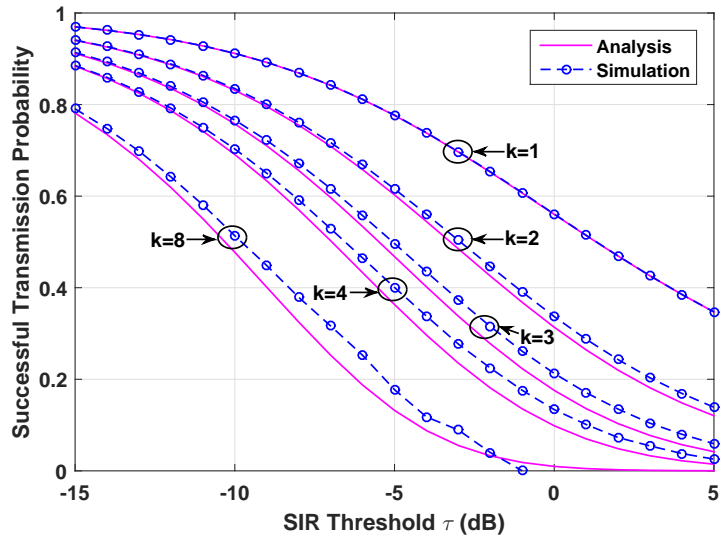}\\
  \caption{Successful transmission probability of the $k$-th nearest SBS with SIC receiver at $n=8$.}
  \label{Fig:The-simulations-of-coverage-probability}
\end{center}
\end{figure}
We first validate the analysis of the successful transmission probability $q_k(\tau)$ with the SIC receiver in Lemma~\ref{Lem:CodedTransmissionCoverageProbability} and its application in the analysis of FOT $L[m_j]$ in Lemma~\ref{Lem:BRofCodedCaching}.
The simulation is conducted over a square area of $4 \text{km} \times 4 \text{km}$ with SBS density $\lambda_b=10^{2}/\text{km}^2$.
Each simulation result is averaged over $10^6$ independent sets of SBS locations and channel realizations.
Note that the simulation results strictly follow the definition in \eqref{eqn:definition-q-k}.
That is, we only evaluate if the transmission of $\phi_k$ is successful when the transmissions from $\{\phi_1,\phi_2,\ldots,\phi_{k-1}\}$ are all successful.
The analytical result in Lemma~\ref{Lem:CodedTransmissionCoverageProbability} is however based on the assumption that the $\text{SIR}_k$'s are independent for different $k$ as mentioned in Appendix A.
Fig.~\ref{Fig:The-simulations-of-coverage-probability} shows that the analytical results and the simulation results of $q_k(\tau)$ match exactly when $k=1$. For $k\ge 2$ and low SIR threshold $\tau$, the results also match well.
For large $\tau$, the analytical results serve as a good lower bound of the actual simulation results in most cases.
It is also seen from Fig.~\ref{Fig:The-simulations-of-coverage-probability} that at large SIR thresholds, when the decoding layer $k$ increases, the successful transmission probability drops considerably.
This indicates that the overall system performance of SIC receiver will be dominated by the first decoding layer at large $\tau$.

\begin{figure}[tb]
\begin{center}
  \includegraphics[width=9.5cm]{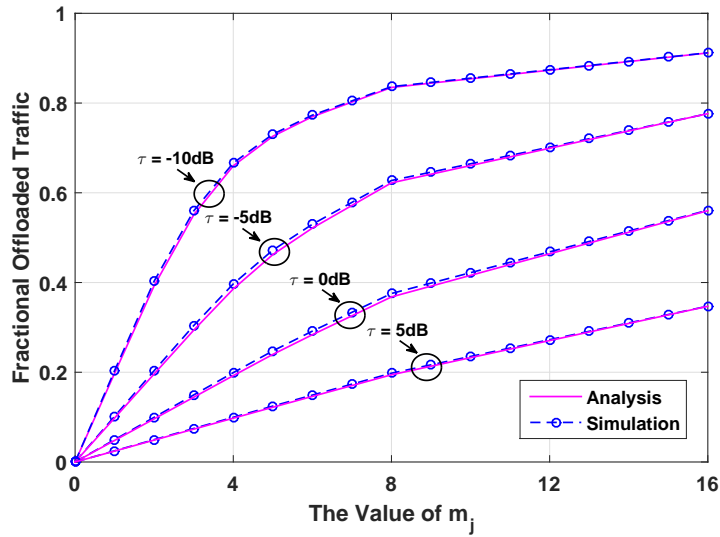}\\
  \caption{Fractional offloaded traffic $L[m_j]$ with $\alpha=4$, $n=16$.}
  \label{Fig:l-m-j-simulation-analysis}
\end{center}
\end{figure}
The comparison between the analytical $L[m_j]$ in~\eqref{eqn:R_c} and the simulated $L[m_j]$ by definition~\eqref{eqn:R-mj} is shown in Fig.~\ref{Fig:l-m-j-simulation-analysis}.
It is seen that the two sets of results match each other very well for all the considered $m_j$ and $\tau$.
This means that ignoring the dependency among the SIRs in different SIC decoding layers has little effects on the FOT performance, even though it can cause a visible gap on the successful transmission probability.
This observation is mainly because the overall FOT performance is dominated by the first two decoding layers, whose approximations are tighter than other layers as shown in Fig.~\ref{Fig:The-simulations-of-coverage-probability}.

\subsection{Algorithm~\ref{alg:MKP} vs. Algorithm~\ref{alg:Near-optimal-Solution} for AFOT maximization}
\label{subsec:Complexity-Comparison}
\begin{figure}[tb]
\begin{minipage}{0.486\linewidth}
\centering
\includegraphics[width=8.2cm]{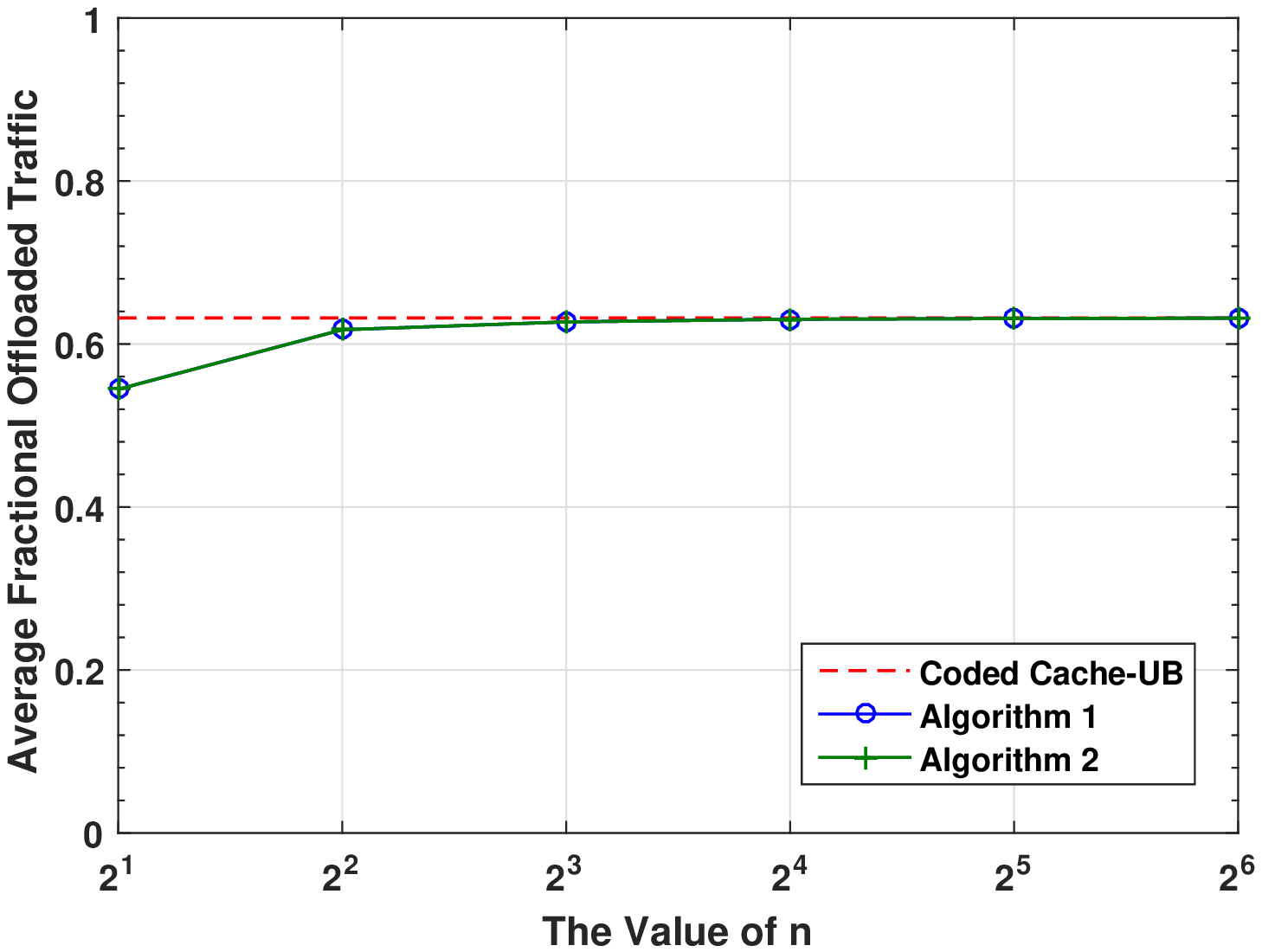}\\
  \caption{AFOT with respect to $n$, with $\gamma=0.6, \tau=-10\ \text{dB}, M=20$.}\label{Fig:PerformanceofIteration}
\end{minipage}
\hfill
\begin{minipage}{0.486\linewidth}
\centering
\includegraphics[width=8.2cm]{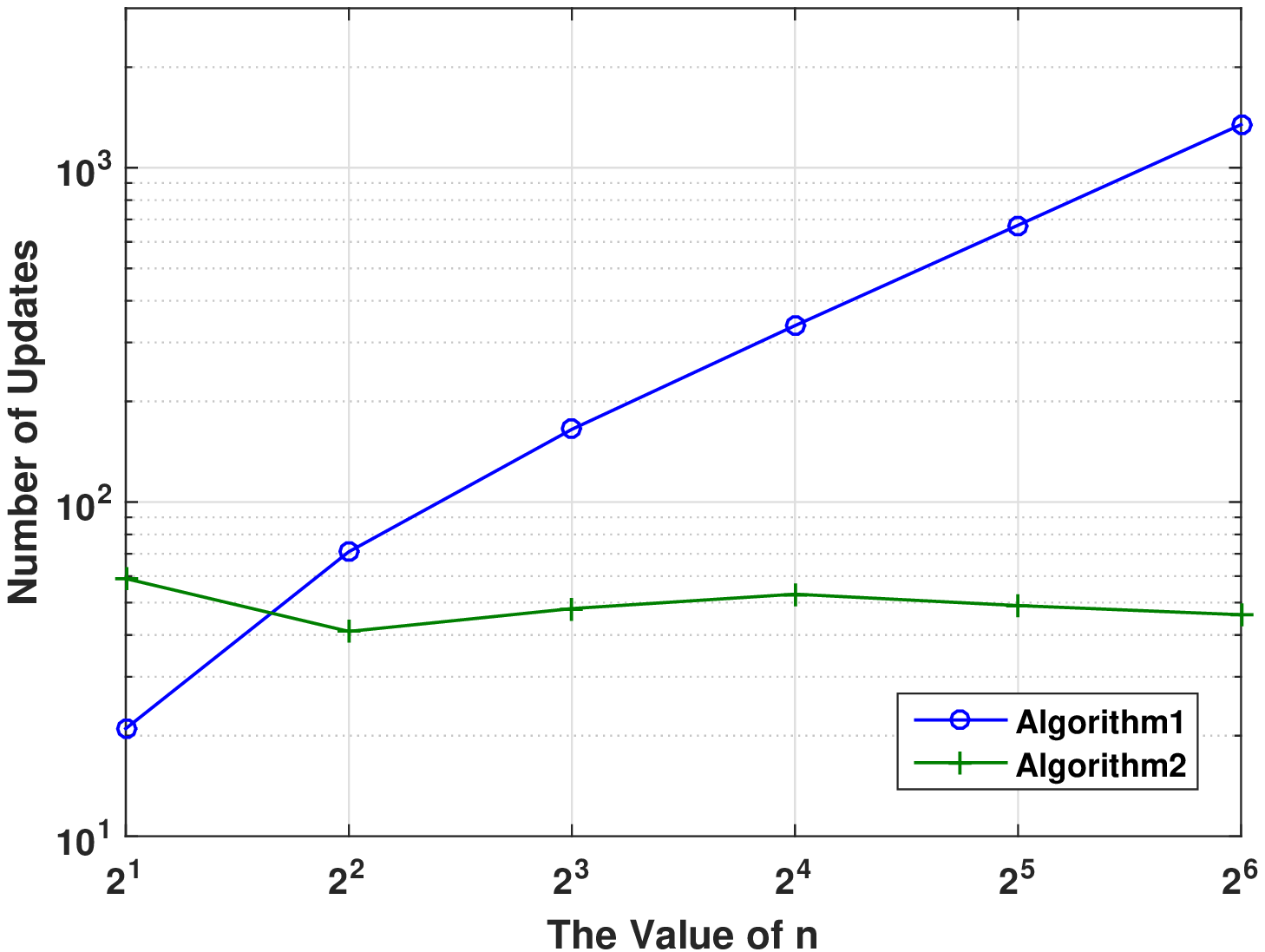}\\
  \caption{Number of updates for $\mathbf{m}$ when $n$ changes, with $\gamma=0.6, \tau=-10\ \text{dB}, M=20$.}\label{Fig:NumberofIteration}
\end{minipage}

\begin{minipage}{0.486\linewidth}
\centering
\includegraphics[width=8.2cm]{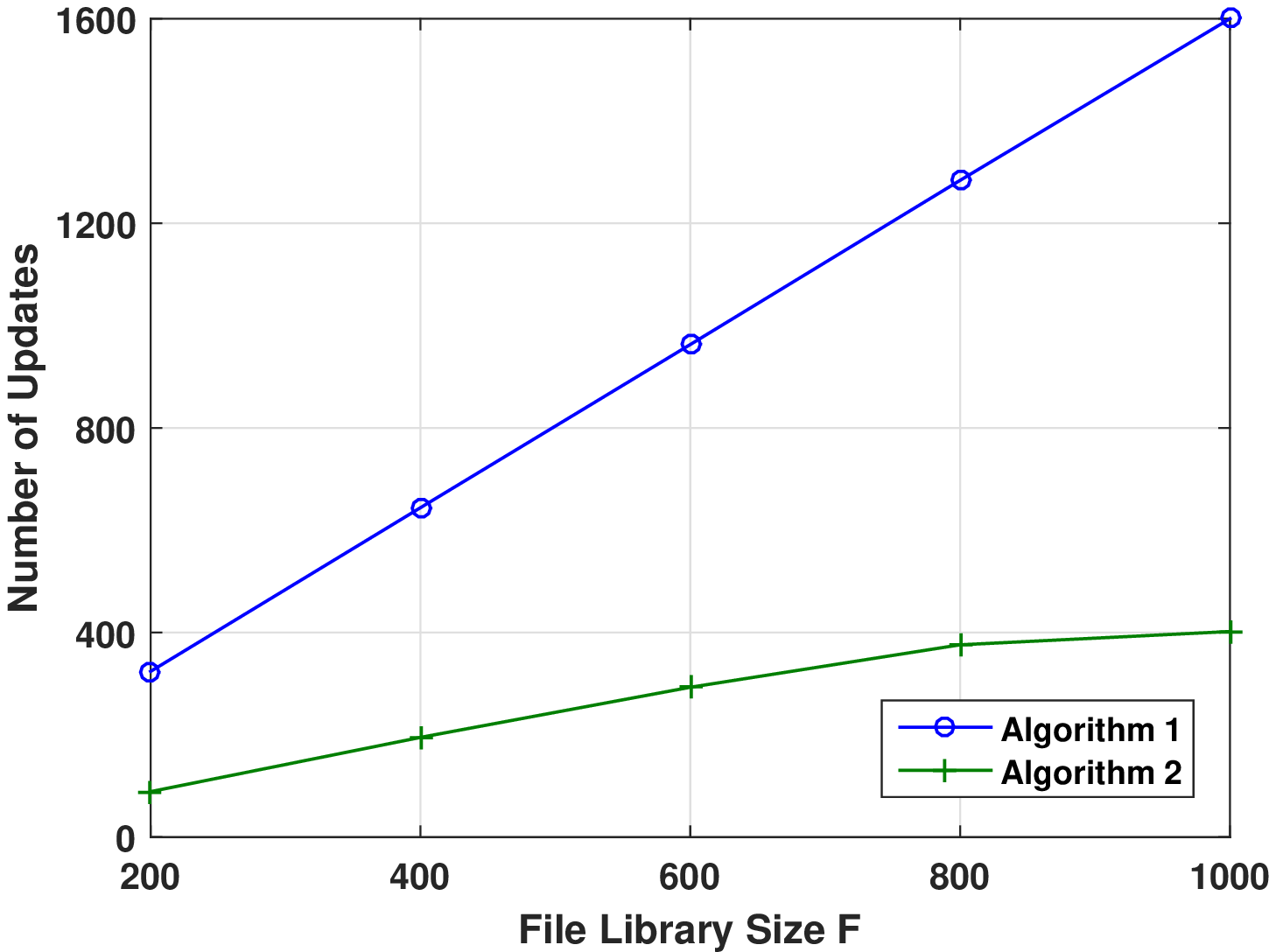}\\
  \caption{Number of updates for $\mathbf{m}$ when $F$ changes, with $\gamma=0.6, \tau=-10\ \text{dB}, M=0.2*F, n=8$.}\label{Fig:NumberofIteration-F}
\end{minipage}
\hfill
\begin{minipage}{0.486\linewidth}
\centering
\includegraphics[width=8.2cm]{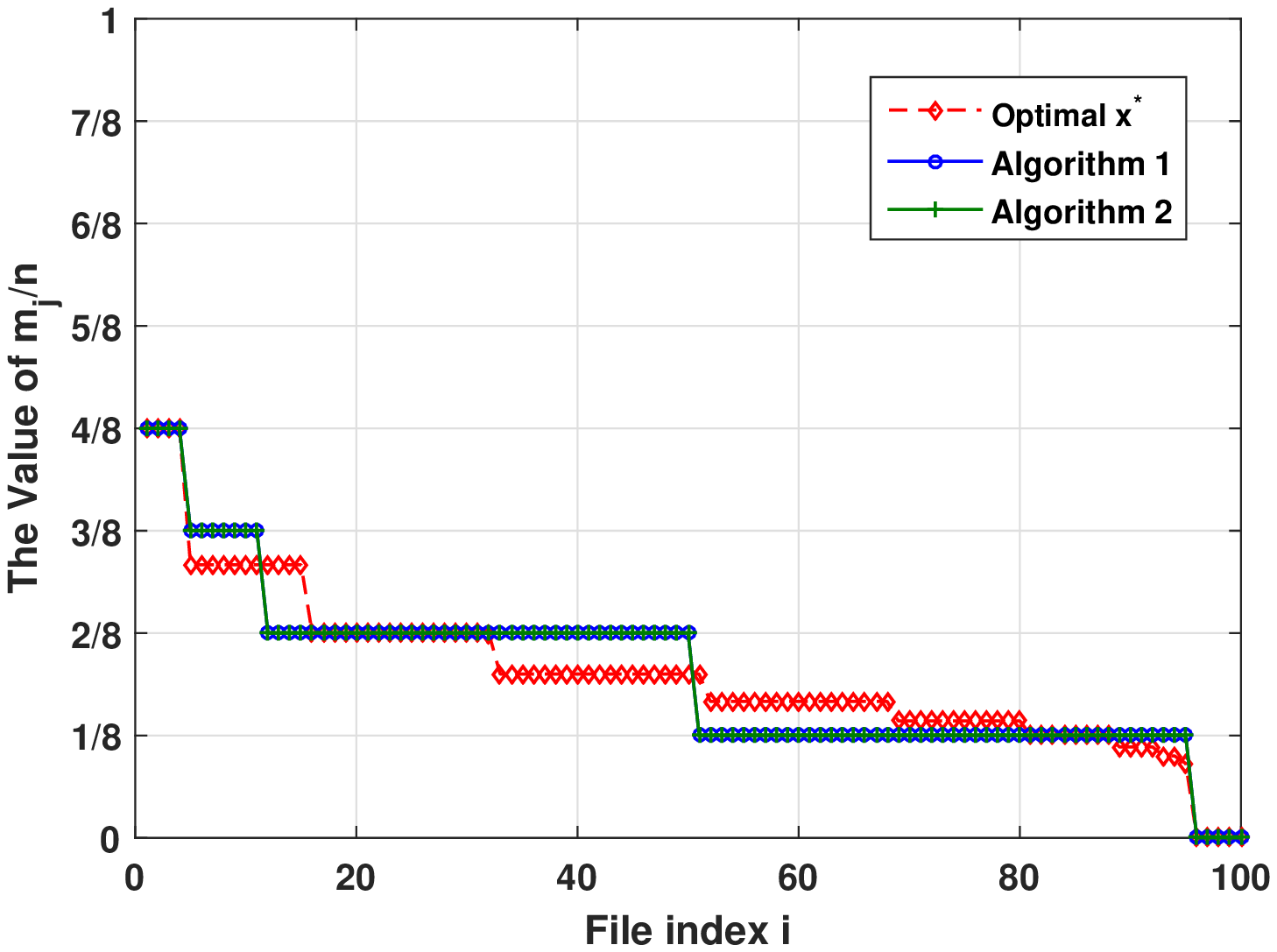}\\
  \caption{The caching vector $\mathbf m$, with $\gamma=0.6, \tau=-10\ \text{dB}, M=20, n=8$.}\label{Fig:m-array}
\end{minipage}
\end{figure}
In this subsection, we provide the comparison between Algorithm~\ref{alg:MKP} and Algorithm~\ref{alg:Near-optimal-Solution} for the AFOT maximization  problem, i.e., Problem~\ref{Prob:Codedcachingoptimization}.
Fig.~\ref{Fig:PerformanceofIteration} shows the AFOT performance at different coding parameter $n$.
As a performance upper bound, the optimal solution of the continuous problem, Problem~\ref{Prob:ContinuousOptimizationProblem} is also shown and denoted as ``Coded Cache-UB''.
It is seen that Algorithm~\ref{alg:Near-optimal-Solution} performs almost identical to Algorithm~\ref{alg:MKP} for all the considered $n$.
When $n$ increases, they both approach to Coded Cache-UB.
Fig.~\ref{Fig:NumberofIteration} and Fig.~\ref{Fig:NumberofIteration-F} compare the computational complexity in terms of the number of caching vector updates at different coding parameter $n$ and different file number $F$, respectively.
The complexity of solving Problem~\ref{Prob:ContinuousOptimizationProblem} in Algorithm~\ref{alg:Near-optimal-Solution} is ignored for simplicity.
Fig.~\ref{Fig:NumberofIteration} shows that as $n$ increases, the complexity of Algorithm~\ref{alg:MKP} increases (still manageable), but the complexity of Algorithm~\ref{alg:Near-optimal-Solution} almost remains unchanged.
Fig.~\ref{Fig:NumberofIteration-F} shows that as $F$ increases, the complexity of Algorithm~\ref{alg:Near-optimal-Solution} increases much slower than that of Algorithm~\ref{alg:MKP}.

Fig.~\ref{Fig:m-array} compares the solution values of the obtained caching vector $\mathbf m$ in different algorithms when $n=8$, where the optimal solution $\mathbf x^*$ of the continuous problem, Problem~\ref{Prob:ContinuousOptimizationProblem}, is also included for comparison.
It is seen that the solution of Algorithm~\ref{alg:Near-optimal-Solution} is identical to that of Algorithm~\ref{alg:MKP}.
Fig.~\ref{Fig:m-array} also shows that more coded packets should be cached for more popular files, which justifies Theorem~\ref{Thm:MorePopularMorePackets}.

\begin{Rem}
In general, the encoding and decoding complexity of MDS codes or RLNC increases as the total number of fragments that each file is split into, i.e., $n$, increases.
However, the actual number of fragments that a file is split into does not have to be $n$, depending on the optimized caching parameter $m_j^*$.
For example, we have $\frac{m_1^*}{n}=\frac{4}{8}=\frac{1}{2}$ for file $W_1$ from Fig.~\ref{Fig:m-array}. Then, $W_1$ only needs to be split into $2$ fragments for encoding and each SBS stores one coded packet.
Also, we have $\frac{m_{20}^*}{n}=\frac{2}{8}=\frac{1}{4}$ for file $W_{20}$. Then, $W_{20}$ only needs to be split into $4$ fragments for encoding and each SBS stores one coded packet.
\end{Rem}

From the above results, we conclude that Algorithm~\ref{alg:Near-optimal-Solution} achieves almost the same performance as Algorithm~\ref{alg:MKP} but with much lower complexity, especially for large $n$ and $F$.
Nevertheless, we still use Algorithm~\ref{alg:MKP} for the AFOT maximization in the rest of this section since it obtains the globally optimal solution with manageable complexity.



\subsection{Effects of coding parameter $n$}
\label{subsec:Effectof-n}
Fig.~\ref{Fig:averageMBStrafficload-n-M} depicts the AFOT performance with respect to different $n$ when the cache size $M$ varies. In the special case with $n=1$, the coded caching reduces to uncoded caching which only stores the $M$ most popular files in each SBS, namely $m_j = n$, for $j\leq M$, and $m_j = 0$, for $j>M$.
It is seen from Fig.~\ref{Fig:averageMBStrafficload-n-M} that the AFOT increases dramatically when $n$ increases from $1$ to $4$ but the gain diminishes quickly when $n$ reaches $4$.
When $n=8$, the performance is very close to the upper bound determined the optimal solution of Problem~\ref{Prob:ContinuousOptimizationProblem}.
Note that the coding parameter $n$ is the maximum possible number of decoding layers for SIC receiver and hence determines the maximum receiver complexity.
It also affects the computational complexity of Algorithm~\ref{alg:MKP} as shown in Fig.~\ref{Fig:NumberofIteration}.
Thus, $n$ cannot be too large in practical systems.
The results in Fig.~\ref{Fig:averageMBStrafficload-n-M} suggest that it is good enough to let $n=8$.
In the rest of this section, we fix $n=8$ when considering the AFOT performance.
\begin{figure}
\begin{center}
  \includegraphics[width=9.5cm]{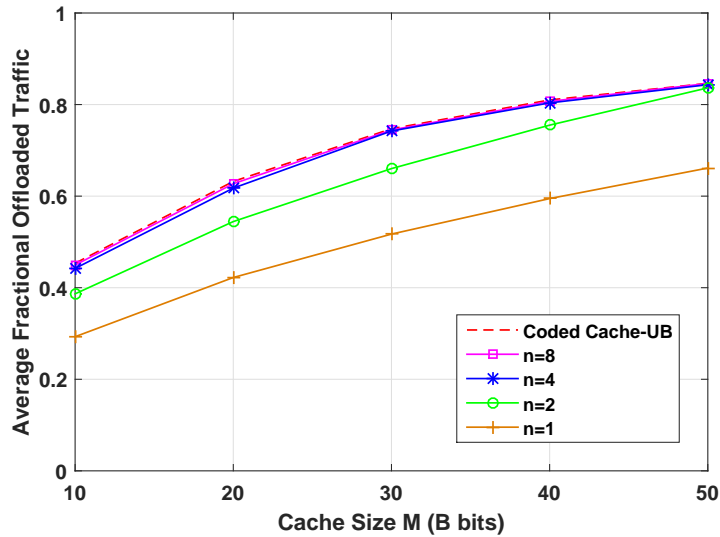}\\
  \caption{AFOT with respect to $n$ and $M$, with $\gamma=0.6, \tau=-10\ \text{dB}$.}
  \label{Fig:averageMBStrafficload-n-M}
\end{center}
\end{figure}


\subsection{Coded Caching vs. Uncoded Caching on AFOT}
\label{subsec:SuperiorPerformanceofCodedCaching}

We show the performance of coded caching in comparison with uncoded caching on the AFOT under different parameter settings.
Two uncoded caching schemes are considered as benchmarks, namely most popular caching (MPC) and optimal probabilistic caching (OPC).

\subsubsection{Most Popular Caching (MPC)}
\label{subsubsec:MPC}
 This is the special case of coded caching when $n=1$. Specifically, we store the $M$ most popular files in each SBS.
 If $u_0$ requests any of these cached files, it only connects with the nearest SBS. If the transmission of the nearest SBS is successful, the user request is satisfied.
 If not, $u_0$ retrieves the file from an MBS.
 If $u_0$ requests any of the other $F-M$ uncached files, the file will be transmitted directly by the MBS. Therefore, the AFOT, denoted as $L_M$, can be calculated as $L_M =q_1\sum_{j=1}^M p_j$.

\subsubsection{Optimal Probabilistic Caching (OPC)}
\label{subsubsec:OPC}
In the probabilistic caching strategy, each SBS caches $W_j$ independently with probability $b_j$. The vector $\mathbf{b} \triangleq [b_1,b_2,\cdots,b_F]$ is referred to as the \emph{caching probability vector} with constraints as $\sum\limits_{j=1}^F b_j \leq M$.
When a user submits a request for $W_j$, it will be associated with the strongest SBS (in terms of the average received signal strength) that has cached the file. The probability vector $\mathbf{b}$ can be optimized to maximize the AFOT.
Note that this problem has been studied in \cite{chen}.
We refer readers to \cite[Theorem 3]{chen}.

\begin{figure}
\begin{center}
\includegraphics[width=9.5cm]{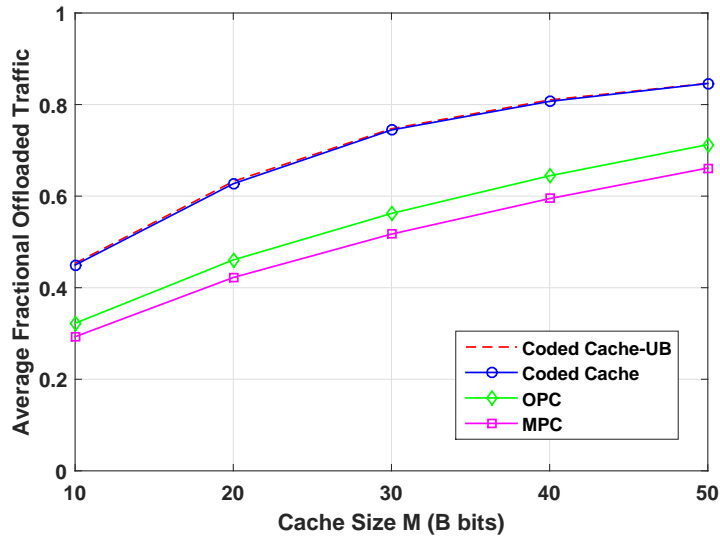}\\
\caption{AFOT with respect to cache size $M$, with $\gamma=0.6, \tau=-10\ \text{dB}.$}\label{Fig:M_0.6_0.1_20}
\end{center}
\end{figure}
Fig.~\ref{Fig:M_0.6_0.1_20} illustrates the AFOT performance with respect to cache size $M$.
It is observed that the proposed coded caching scheme performs significantly better than the two benchmarks.  This is because coded caching can fully exploit the accumulated cache size in the network. MPC, on the other hand, only enjoys the local caching gain since the same contents are cached in each SBS. OPC improves upon MPC by exploiting the randomness in the probabilistic caching but the gain is limited.

\begin{figure}
\begin{center}
\includegraphics[width=9.5cm]{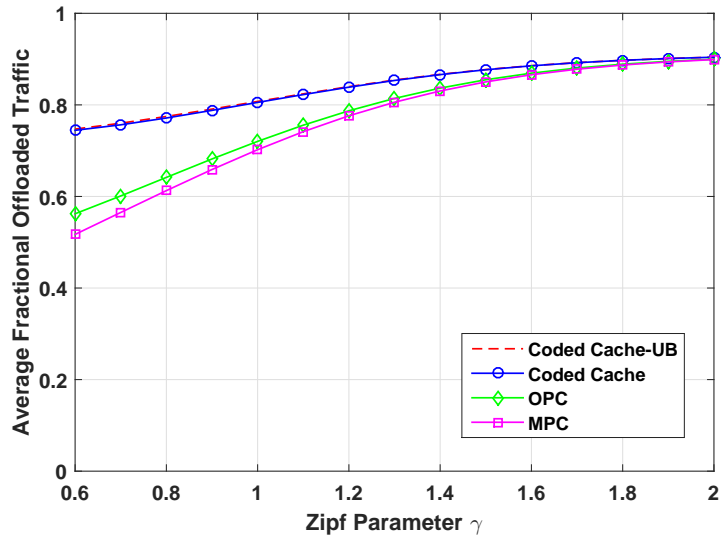}\\
\caption{AFOT with respect to Zipf parameter $\gamma$, with $\tau=-10\ \text{dB}, M=30.$}\label{Fig:s_30_0.1_20}
\end{center}
\end{figure}
Fig.~\ref{Fig:s_30_0.1_20} illustrates the AFOT performance with respect to Zipf parameter $\gamma$.
It is again observed that coded caching has much better performance than MPC and OPC, especially when $\gamma$ is small $(\gamma < 1.2)$.
In the high $\gamma$ region $(\gamma > 1.8)$, most of the user requests are for the few popular files in $\mathcal{W}$.
Both coded caching and OPC degenerate to MPC, which matches with the finding in Corollary~\ref{Cor:tau-gamma-MPC}.

\begin{figure}
\begin{center}
\includegraphics[width=9.5cm]{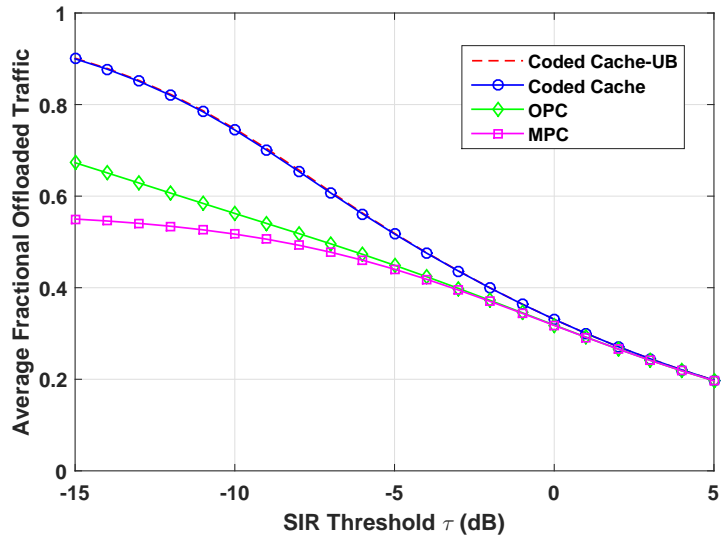}\\
\caption{AFOT with respect to SIR threshold $\tau$, with $\gamma=0.6, M=30.$}\label{Fig:tau_30_0.6_20}
\end{center}
\end{figure}
In Fig.~\ref{Fig:tau_30_0.6_20}, we illustrate the AFOT performance with respect to the target SIR threshold $\tau$.
 In the small $\tau$ region, coded caching has much better performance due to that the accumulated cache size of the nearest SBSs are used to store different contents.
 When $\tau$ increases, the performances of coded and uncoded caching schemes all converge as expected from Corollary~\ref{Cor:tau-gamma-MPC}.
 This is because when $\tau$ is large, the successful transmission probability for the nearest SBS in coded caching is low as shown in Fig.~\ref{Fig:The-simulations-of-coverage-probability}.
 As a result, the chance for the SIC-based receiver to decode the second or higher order signal is much lower.
 Thus, the accumulated cache size cannot be fully exploited by coded caching.
 In fact, for both coded caching and OPC, most of the cache size in each SBS is used to store the most popular contents when the successful transmission probability for the nearest SBS has been low.

\subsection{Coded Caching vs. MPC on AER}
\label{subsec:SuperiorPerformanceofCodedCachingAverageErgodicRate}
\begin{figure}[tb]
\begin{center}
\includegraphics[width=9.5cm]{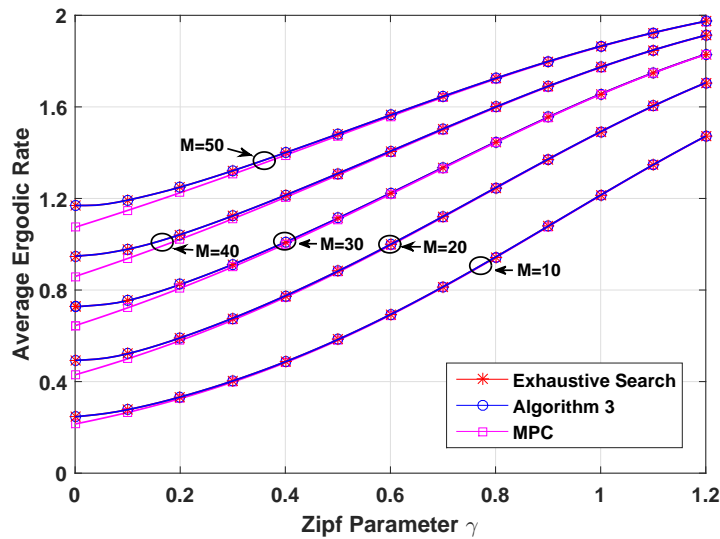}\\
\caption{The performance of average ergodic rate with respect to $\gamma$ and $M$, with $n=4$.}\label{Fig:AverageErgodicRate}
\end{center}
\end{figure}

Finally, in Fig.~\ref{Fig:AverageErgodicRate}, we illustrate the AER performance of coded caching in comparison with the conventional MPC at different Zipf parameter $\gamma$ and different cache size $M$.
The optimal solution of Problem~\ref{Prob:CodedcachingonAverageErgodicRate}, the coded cache problem for AER maximization,  obtained via exhaustive search is also presented for comparison.
It is first seen that Algorithm~\ref{alg:Heuristic-Near-optimal-Solution}, the proposed heuristic algorithm to solve Problem~\ref{Prob:CodedcachingonAverageErgodicRate}, performs almost identical to the optimal solution.
It is then seen that coded caching outperforms MPC when $\gamma \to 0$, i.e., the file popularity distribution is close to uniform, but the gain vanishes when $\gamma>0.2$.
This is because the ergodic rate of each file (if cached) in coded caching is determined by the worst SIR among all decoding layers as defined in~\eqref{eqn:definition-ergodic-rate} while the ergodic rate of each file (if cached) in MPC is always determined the nearest SBS.
As such, when the Zipf parameter $\gamma$ is large, the gain from the high cache hit as a result of exploiting the accumulated cache size in coded caching is not enough to compensate the rate loss due to SIC receiver. Thus, coded caching will degenerate to MPC when the file popularity distribution is non-uniform.

 \section{Conclusions}\label{sec:Conclusion}
In this paper, we investigated the optimal coded caching in cache-enabled SCNs.
We proposed a content delivery framework for coded caching where multiple SBSs transmit concurrently upon a user request and the user decodes independent coded packets from these SBSs with SIC-based receiver.
We obtained a closed-form expression for the success probability of each decoding layer with SIC receiver in the high SNR region and a closed-form expression of the average fractional offloaded traffic.
We also obtained a tractable expression of the average ergodic rate.
Then, a greedy-based optimal algorithm was proposed to solve the cache placement problem for AFOT maximization.
We also obtained an upper bound assuming a limiting case that the files are split into infinite parts.
Based on that, a low-complexity algorithm with high performance was proposed for the cache placement problem.
To maximize AER, we formulated the coded cache problem as a standard MCKP and proposed a heuristic algorithm to solve it.
Numerical results showed that coded caching with SIC receiver significantly outperforms uncoded caching in terms of AFOT performance at low SIR thresholds, even when number of segments for file splitting is small.
It was shown that coded caching degenerates to MPC in terms of AER performance unless the file popularity distribution tends to be uniform.

\section*{Appendix A: Proof of Lemma~\ref{Lem:CodedTransmissionCoverageProbability}}
In general, $\text{SIR}_k$'s in \eqref{eqn:SIR-C} are correlated for different $\phi_k$ in the SIC receiver.
To make the analysis more tractable, we assume that the events $\{\text{SIR}_k\geq \tau\}$ are independent for all $k\in\mathcal{N}^+$ as in \cite{Quek2014}.
As shown in Section~\ref{subsec:SuccessfulTransmissionProbabilityValidation}, ignoring such correlations has little effects on the system performance as in \cite{Quek2014}.
Hence, the successful transmission probability for the $k$-th nearest SBS is
\begin{align}\label{eqn:q-k-proof}
 q_k(\tau) &= \text{Pr}\left[\text{SIR}_k\geq \tau \bigg| \bigcap_{i=1,2,\ldots,k-1}\text{SIR}_i\geq \tau\right]
 \overset{(a)}{\approx} \text{Pr}\left[\text{SIR}_k\geq \tau\right] \nonumber\\
 &\overset{(b)}{=} {\left(1+\tau^{\frac2{\alpha}} \int_{\tau^{-\frac2{\alpha}}}^{\infty}\frac1{1+w^{\frac{\alpha}2}}\mathrm{d}w\right)}^{-k}
 \overset{(c)}{=}\left(1+\tau^{\frac2{\alpha}}\int_{\tau^{-\frac1{\alpha}}}^{^{\infty}}\frac{2t}{1+t^{\alpha}}\mathrm{d}t\right)^{-k} \nonumber\\
 &\overset{(d)}{=}\left(1+\frac2{\alpha}\tau^{\frac2{\alpha}}\int_{\frac1{1+\tau}}^1 u^{\frac2{\alpha}-1}(1-u)^{-\frac2{\alpha}}\mathrm{d}u\right)^{-k}\nonumber\\
 &=\left(1+\frac2{\alpha}\tau^{\frac2{\alpha}}B'\left(\frac2{\alpha},1-\frac2{\alpha},\frac1{1+\tau}\right)\right)^{-k}.
\end{align}
Here, $(a)$ is due to the independence assumption as in \cite{Quek2014};
$(b)$ follows from \cite[Lemma 3]{Quek2014} by letting $\tau=\eta_t$;
$(c)$ is due to the change of variable $t=w^{\frac12}$; and $(d)$ is due to the change of variable $u=\frac1{1+t^{-\alpha}}$.
Note that the changes of variables in $(c)$ and $(d)$ are to convert the unbounded integral in $(b)$ to the definite integral, which is numerically more convenient.

\section*{Appendix B: Proof of Lemma~\ref{Lem:BRofCodedCaching}}
 When $m_j=0$, there is no packet of $W_j$ in each SBS. Therefore, FOT is equal to $0$.

 When $m_j \in \mathcal{N^+}$, we have FOT as follows.
 \begin{align}\label{eqn:R_c-sum-1}
    L[m_j]
    =& \sum\limits_{k=1}^{\lceil\frac{n}{m_j}\rceil} \frac{(k-1)m_j}{n} \left(1-\text{Pr} \left[\text{SIR}_k\geq \tau\bigg|\bigcap_{i=1,2,\ldots,k-1}\text{SIR}_i\geq \tau\right]\right)\text{Pr} \left[\bigcap_{i=1,2,\ldots,k-1}\text{SIR}_i\geq \tau\right]\nonumber \\
    &+ \text{Pr} \Bigg[\text{SIR}_{\lceil\frac{n}{m_j}\rceil}\geq \tau\bigg|
    \bigcap_{i=1,2,\ldots,\lceil\frac{n}{m_j}\rceil-1}\text{SIR}_i\geq \tau\Bigg]
    \text{Pr} \Bigg[\bigcap_{i=1,2,\ldots,\lceil\frac{n}{m_j}\rceil-1}\text{SIR}_i\geq \tau\Bigg] \nonumber\\
    =&\sum\limits_{k=1}^{\lceil\frac{n}{m_j}\rceil} \frac{(k-1)m_j}{n}\left(1-q_k(\tau)\right)\prod\limits_{i=1}^{k-1}q_i(\tau) + \prod\limits_{i=1}^{\lceil\frac{n}{m_j}\rceil}q_i(\tau).
\end{align}

Substituting \eqref{eqn:CodedTransmissionCoverageProbability} from Lemma~\ref{Lem:CodedTransmissionCoverageProbability} into \eqref{eqn:R_c-sum-1} and letting $Q_{\tau} = 1+\frac2{\alpha}\tau^{\frac2{\alpha}}B'\left(\frac2{\alpha},1-\frac2{\alpha},\frac1{1+\tau}\right)$, we have
\begin{align}\label{eqn:Prove-R_c-2}
    L[m_j] =& \sum\limits_{k=1}^{\lceil\frac{n}{m_j}\rceil} \frac{(k-1)m_j}n \left(1-Q_{\tau}^{-k}\right) Q_{\tau}^{-\frac{k(k-1)}2} + Q_{\tau}^{-\frac{\bigl\lceil\frac{n}{m_j}\bigr\rceil\left(\bigl\lceil\frac{n}{m_j}\bigr\rceil+1\right)}2}\nonumber \\
    =&\frac{m_j}n \sum\limits_{k=1}^{\lceil\frac{n}{m_j}\rceil}(k-1) \left(Q_{\tau}^{-\frac{k(k-1)}2}-Q_{\tau}^{-\frac{k(k+1)}2}\right)  + Q_{\tau}^{-\frac{\bigl\lceil\frac{n}{m_j}\bigr\rceil\left(\bigl\lceil\frac{n}{m_j}\bigr\rceil+1\right)}2}\nonumber \nonumber \\
    =&\frac{m_j}n\sum\limits_{k=1}^{\lceil\frac{n}{m_j}\rceil} Q_{\tau}^{-\frac{k(k+1)}2} +
    \left(1-\frac{m_j}{n}\biggl\lceil\frac{n}{m_j}\biggr\rceil\right)
    Q_{\tau}^{-\frac{\bigl\lceil\frac{n}{m_j}\bigr\rceil\left(\bigl\lceil\frac{n}{m_j}\bigr\rceil+1\right)}2}\nonumber \nonumber \\
    =&\frac{m_j}n\sum\limits_{k=1}^{\lceil\frac{n}{m_j}\rceil}C_k(\tau)
    +\left(1-\frac{m_j}{n}\biggl\lceil\frac{n}{m_j}\biggr\rceil\right)
    C_{\lceil\frac{n}{m_j}\rceil}(\tau).
\end{align}
Thus, Lemma~\ref{Lem:BRofCodedCaching} is proved.

\section*{Appendix C: Proof of Lemma~\ref{Lem:R-xj-increase-concave}}
When $x_j \in \left[\frac1{t_x},\frac1{t_x-1}\right)$, $t_x\in \{2,3,\cdots\}$, we have $\bigl\lceil\frac1{x_j}\bigr\rceil=t_x$.
Hence $L(x_j)$ is a piece-wise linear function.

In each region $x_j \in \left(\frac1{t_x},\frac1{t_x-1}\right)$,
the derivative of $L(x_j)$ can be obtained as
\begin{align}\label{eqn:derivative-R-c}
 \frac{\mathrm{d}L(x_j)}{\mathrm{d}x_j} = \left(\sum\limits_{i=1}^{t_x} C_i(\tau)-t_x  C_{t_x}(\tau)\right) > 0.
\end{align}
In \eqref{eqn:derivative-R-c}, the last inequality holds due to the decreasing property of $C_k(\tau)$. Thus, $L(x_j)$ is an increasing function.

We now prove the concavity of $L(x_j)$.
Assuming $x_j \in \left(\frac1{t_x},\frac1{t_x-1}\right)$ and $x'_j\in \left(\frac1{t_x+1},\frac1{t_x}\right)$, we have that
\begin{align}\label{eqn:thedifference}
 &\frac{\mathrm{d}L(x_j)}{\mathrm{d}x_j}-\frac{\mathrm{d}L(x'_j)}{\mathrm{d}x'_j}
  = t_x\left(C_{t_x+1}(\tau)-C_{t_x}(\tau)\right) < 0.
\end{align}
Equation~\eqref{eqn:thedifference} means that the successive slopes of $L(x_j)$ are decreasing.
$L(x_j)$ is thus a concave function. Lemma~\ref{Lem:R-xj-increase-concave} is thus proved.


\section*{Appendix D: Proof of Lemma~\ref{Lem:ErgodicRateofCodedCaching}}
 When $m_j=0$, there is no packet of $W_j$ in each SBS, thus the ergodic rate is $0$.

 When $m_j \in \mathcal{N^+}$, we have
\begin{align}\label{}
  R[m_j] =& \biggl\lceil\frac{n}{m_j}\biggr\rceil \int_0^{\infty} \text{Pr} \left[\log\left(1+\min\left(\text{SIR}_1,\ldots,\text{SIR}_{\bigl\lceil\frac{n}{m_j}\bigr\rceil}\right)\right) \geq r\right] \mathrm{d}r \nonumber \\
  =& \biggl\lceil\frac{n}{m_j}\biggr\rceil \int_0^{\infty} \text{Pr} \left[\min\left(\text{SIR}_1,\ldots,\text{SIR}_{\bigl\lceil\frac{n}{m_j}\bigr\rceil}\right) \geq 2^r - 1\right] \mathrm{d}r \nonumber \\
  =& \biggl\lceil\frac{n}{m_j}\biggr\rceil \int_0^{\infty} \text{Pr} \left[\text{SIR}_1\geq 2^r - 1, \ldots,  \text{SIR}_{\bigl\lceil\frac{n}{m_j}\bigr\rceil}\geq 2^r - 1\right] \mathrm{d}r \nonumber \\
  \overset{(a)}{\approx}& \biggl\lceil\frac{n}{m_j}\biggr\rceil \int_0^{\infty}\prod\limits_{k=1}^{\lceil\frac{n}{m_j}\rceil} \text{Pr} \left[\text{SIR}_k\geq 2^r - 1\right] \mathrm{d}r \nonumber \\
  \overset{(b)}{=}& \biggl\lceil\frac{n}{m_j}\biggr\rceil \int_0^{\infty} C_{\lceil\frac{n}{m_j}\rceil}(2^r-1) \mathrm{d}r.
\end{align}
Here, $(a)$ is due to the independence assumption of $\{\text{SIR}_k\}$ and $(b)$ follows from~\eqref{eqn:q-k-proof} and by letting $\tau = 2^r-1$ in~\eqref{eqn:c_k_tau}.

\section*{Appendix E: Proof of Theorem~\ref{Thm:MorePopularMorePackets}}
We assume that there are two files in $\mathcal{W}$ and file $W_1$ is more popular than file $W_2$, i.e., $p_1>p_2$. However, we assume that less packets of file $W_1$ is stored in each SBS than that of file $W_2$, i.e., $m_1<m_2$, which has maximized the AFOT performance. We have the sum of FOT for file $W_1$ and file $W_2$ as
\begin{align}\label{}
    L_{12}=p_1 L[m_1] + p_2 L[m_2].
\end{align}

Now, we try to store $m_2$ packets of file $W_1$ and $m_1$ packets of file $W_2$. Thus, the sum of FOT for file $W_1$ and file $W_2$ can be calculated as
\begin{align}\label{}
   L_{21}=p_1 L[m_2] + p_2 L[m_1].
\end{align}
Then, we compare $L_{12}$ with $L_{21}$
\begin{align}\label{}
    L_{12}-L_{21}=(p_1-p_2) \left(L[m_1]-L[m_2]\right).
\end{align}
We have known that $L[m_j]$ is an increasing sequence of $m_j$. Therefore, $L[m_1]<L[m_2]$. Then, we can get $L_{12}<L_{21}$. That means if we store $m_2$ packets of file $W_1$ and $m_1$ packets of file $W_2$, AFOT can be larger. Therefore, more popular files store more packets in cache.


\section*{Appendix F: Proof of Corollary~\ref{Cor:tau-gamma-MPC}}
Let $x=\left(1+\frac2{\alpha}\tau^{\frac2{\alpha}} B'\left(\frac2{\alpha},1-\frac2{\alpha},\frac1{1+\tau}\right)\right)^{-1}$.
For $\tau > 0$, $x \in (0,1)$ is a decreasing function of $\tau$.
Define $g(x)=\frac{\sum_{k=1}^nC_k(\tau)}{C_1(\tau)-C_2(\tau)}$.
We thus have
\begin{align}\label{}
  \frac{\mathrm{d}{g(x)}}{\mathrm{d}x} = \frac{\sum\limits_{k=1}^n x^{\frac{k(k+1)}2}\left(\frac{k(k+1)}2-1-x^2 \left(\frac{k(k+1)}2-3\right)\right)}{x^2\left(1-x^2\right)^2}.
\end{align}
Since $x\in (0,1)$, $\frac{\mathrm{d}{g(x)}}{\mathrm{d}x}>0$.
Thus, $g(x)$ is an increasing function of $x$.
Overall, $\frac{\sum_{k=1}^nC_k(\tau)}{C_1(\tau)-C_2(\tau)} \in (1,+\infty)$ is a decreasing function of $\tau$.
If $p_M > p_{M+1}$, there must exist a $\tau_0>0$ such that inequality~\eqref{eqn:condition-degenrate-MPC} satisfies when $\tau \geq \tau_0$.
The first part of the Corollary is thus proven.

Furthermore, if the file popularity follows the Zipf distribution with shape parameter $\gamma >0$, i.e., $p_j = j^{-\gamma}/\sum_{f=1}^F f^{-\gamma}, j\in\mathcal{F}.$
We then have
\begin{align}\label{eqn:gamma-MPC-proof}
  \frac{p_M}{p_{M+1}} = \left(1+\frac1M\right)^{\gamma} >1.
\end{align}
Equation~\eqref{eqn:gamma-MPC-proof} is an increasing function of $\gamma$.
Hence, for any given $\tau>0$, there must exist a $\gamma_0>0$ such that inequality~\eqref{eqn:condition-degenrate-MPC} satisfies when $\gamma \geq \gamma_0$.
The second part of the Corollary is also proven.


\bibliographystyle{IEEEtran}
\bibliography{IEEEabrv,ICC_reference_v2}

\end{document}